\newtheorem{theorem}{Theorem}
\newtheorem{corollary}[theorem]{Corollary}
\newtheorem{lemma}[theorem]{Lemma}
\newtheorem{proposition}[theorem]{Proposition}
\newtheorem{remark}[theorem]{Remark}
\newenvironment{proof}[1][Proof]{\noindent\textbf{#1.} }{\ \rule{0.5em}{0.5em}}
\begin{document}
\title{Edge-Disjoint Node-Independent Spanning Trees in Dense Gaussian Networks}

\author{Bader~AlBdaiwi,
        Zaid~Hussain,
        Anton~Cerny,
        and~Robert~Aldred
\thanks{B. AlBdaiwi and Z. Hussain are with the Computer Science Department, Kuwait University, Kuwait. P.O. Box 5969 Safat 13060. 
E-mail: \{bdaiwi,zhussain\}@cs.ku.edu.kw}
\thanks{A. Cerny is with the Department of Information Science, Kuwait University, Kuwait.
E-mail: anton.cerny@ku.edu.kw}
\thanks{R. Aldred is with the Department of Mathematics and Statistics, University Of Otago, P.O. Box 56, Dunedin 9054, New Zealand. E-mail: raldred@maths.otago.ac.nz}
}
\maketitle
\begin{abstract}
Independent trees are used in building secure and/or fault-tolerant network communication protocols.
They have been investigated for different network topologies including tori.
Dense Gaussian networks are potential alternatives for 2-dimensional tori.
They have similar topological properties; however,
they are superiors in carrying communications
due to their node-distance distributions and smaller diameters.
In this paper, we present constructions of edge-disjoint node-independent spanning trees in dense Gaussian networks.
Based on the constructed trees, we design algorithms that could be used in fault-tolerant routing or secure message distribution.
\end{abstract}

\begin{IEEEkeywords}
Circulant Graphs, Gaussian Networks, Spanning Trees, Edge-Disjoint Trees, Node-Independent Spanning Trees,
Fault-Tolerant Communications, Secure-Message Distribution.
\end{IEEEkeywords}


%

\section{Introduction}\label{sec:introduction}
Parallelism is the performance improvement catalyst in nowadays computing
industry. More computing power and higher scalability are being achieved by
deploying multi-core processing units~\cite{williams2009roofline}. This
could vary from multi-core single processing unit in personal computing
devices to multi-core multi-processing units on servers and supercomputers.
Communications and data exchanges among different processing elements are
usually carried on their interconnection network. Thus, an interconnection
network is a critical limiting factor on a parallel system overall performance
and scalability; consequently, interconnection networks have been receiving considerable research
attention for more than forty years~\cite{dally2004principles}.

The topology of an interconnection network is a main decisive criterion of its
performance qualities. It determines the network communication efficiency as
well as its fault tolerance capabilities. There has been plenty of research on
interconnection topologies like: arrays, trees, hypercubes, butterfly, meshes,
$k$-ary $n$-cubes, and tori \cite{bose1995lee} \cite{dally2004principles}
\cite{leighton2014introduction}. A choice of a topology on which to build a
parallel system is constrained by the target performance objectives,
contemporary available technology, and simplicity. In the 1960's arrays and
two-dimensional meshes were used to build machines like Illiac
\cite{barnes1968illiac} and Soloman \cite{Slotnick:1962:SC:1461518.1461528}.
Later, hybercubes became popular topologies and machines like Cosmic Cube
\cite{seitz1985cosmic} and nCUBES \cite{Ncube:1988:NFH:62297.62415} were built
on them. In late 1980's, meshes, $k$-ary $n$-cube, and toroidal networks
started to become more favorable topologies. Cray T3D \cite{cray1993}, Cray
T3E \cite{scott1996cray}, and Intel Paragon \cite{esser1993intel} are examples
for machines built on these kinds of networks.

An efficient interconnection network called Gaussian network has been
introduced in~\cite{10.1109/TC.2008.57}. It is also called a dense Gaussian network~\cite{martinez2006dense}
when it contains the maximum number of nodes for a given diameter. various facts on some important
versions of such networks, defined in a different equivalent way, have been
described in~\cite{martinez2006dense}. Several related studies on Gaussian
networks followed in~\cite{Flahive:2010:TGE:1850268.1850301}
\cite{Shamaei:2014:HDG:2672598.2673055} \cite{touzene2015all}
\cite{zhang2013efficient}.
Dense Gaussian networks deserve special attention as they are potential alternatives
for $N \times N$ tori.
They have similar structures, same number of nodes, same
number of edges, and both are regular degree four graphs.
Dense Gaussian networks, however,
possesses smaller diameters and better node-distance distributions.
Therefore, they can carry communications more efficiently.

Independent spanning trees can be used in building reliable communication
protocols~\cite{Bao19983}\cite{Chang2015489}\cite{6747363}\cite{DBLP:journals/tc/FragopoulouA96}
\cite{ITAI198843}\cite{Kong:2006:BRM:2692151.2692155}\cite{Touzene:2002:EST:636533.636538}
\cite{DBLP:journals/jpdc/TouzeneDM05}\cite{Tseng96efficientbroadcasting}\cite{DBLP:journals/jpdc/WangB01}
\cite{Yang01012014}\cite{6948321}\cite{10.1007/s11227-014-1346-z}.
Given a network $\mathcal{N}$ in which there exist $t$ independent spanning
trees rooted in $r$, then $r$ can successfully broadcast a message to every
non faulty node $v$ in $\mathcal{N}$ even if up to $t-1$ faulty nodes exist.
Since the number of faulty nodes is less than $t$, then at least one of those
$t$ node disjoint paths from $r$ to $v$ is fault free. Thus, if $r$ broadcasts
through all the existing $t$ independent spanning trees, every non faulty node
in $\mathcal{N}$ would receive the broadcasted message. Similarly, if
$\mathcal{N}$ has $t$ edge disjoint independent spanning trees, all rooted in
$r$, it would be possible to broadcast from $r$ tolerating up to $t-1$ node or
edge faults. The communication applications of independent spanning trees are
not limited to fault tolerance communication. Independent spanning trees could
be utilized in secure message distribution. A message can be split into $t$
packets each of which is sent by $r$ through a different spanning tree to its
destination. In this case only the destination would receive all the $t$
packets while every other node would receive at most one of the $t$ packets
\cite{Lin2010414}\cite{Yang20111254}\cite{Yang:2009:IST:1726593.1728973}. Edge disjoint
independent spanning trees can also be used in building efficient casting
algorithms. A message of size $M$ can be split into $t$ packets, each of size
$M/t$, and simultaneously casted through different trees. Since the trees are
edge disjoint, this splitting could result in less packet collisions and
higher edge utilization, hence it could reduce the average communication time when
wormhole switching is used~\cite{Alsaleh}\cite{dally2004principles}.

In this paper, we present a construction of Edge-Disjoint Node-Independent Spanning Trees
(EDNIST) in dense Gaussian networks.
The rest of the paper is organized as follows. Section~\ref{SectionPrelim}
introduces some necessary preliminaries. Basic facts on Gaussian networks
are briefly summarized in Section~\ref{SectionGaussNet}.
Constructions for two edge-disjoint node-independent spanning trees in
a dense Gaussian network is described in Section~\ref{SectionEDNIST}.
Communication algorithms based on the constructed trees are provided in
Section~\ref{SectionRouting}. Finally, Section~\ref{SectionConcl} concludes
the findings of this paper.

\section{Preliminaries\label{SectionPrelim}}

We will denote by $\mathbb{N}=\{0,1,...\}$ the set of all natural numbers and
by {$\mathbb{Z}=\{  \ldots,-2,-1,0,\allowbreak
1,2,\ldots \}  $ the set of all
integers. An interconnection network is usually modeled as a graph where the
computing elements and their wiring are represented as vertices and edges,
respectively. An undirected \textsl{graph} is a pair $G=(V,E)$ where $V$ is a
set of \textsl{vertices} or \textsl{nodes}, and $E$ is a set of
\textsl{edges }. The edges are unordered pairs. We will use the
notation $V\left(  G\right)  =V$, $E\left(  G\right)  =E$. We are not considering directed graphs in this paper.
If $\left(  u,v\right)  \in E$, we say that
$u$ and $v$ are \textsl{adjacent }and that the vertices $u,v$
and the edge $\left(  u,v\right)$ are \textsl{incident }with each other.
The \textsl{degree} of $v\in V$ is the number of
edges in $E$ with which the vertex $v$ is incident.
The \textsl{union }of two graphs is defined by $\left(  V^{\prime},E^{\prime
}\right)  \cup$ $\left(  V^{\prime\prime},E^{\prime\prime}\right)  =\left(
V^{\prime}\cup V^{\prime\prime},E^{\prime}\cup E^{\prime\prime}\right)  $. A
\textsl{subgraph }of a graph $G$ is any graph $G^{\prime}=(V^{\prime
},E^{\prime})$ such that $V^{\prime}\subseteq V$ and $E^{\prime}\subseteq E$.
Two graphs are \textsl{vertex-disjoint } if their sets
of vertices are disjoint.
Similarly, two graphs are \textsl{edge-disjoint } if their sets
of edges are disjoint. An \textsl{isomorphism} of graphs
$G=\left(  V,E\right)  ,G^{\prime}=\left(  V^{\prime},E^{\prime}\right)  $ is
a bijection $f:V\rightarrow V^{\prime}$ such that $E^{\prime}=\left\{  \left(
f\left(  u\right)  ,f\left(  v\right)  \right)  |\left(  u,v\right)  \in
E\right\}  $; $f$ is an \textsl{automorphism} on $G$ if $G^{\prime}=G$.
A\textsl{ path from vertex }$u$ to vertex $v$ in $G$ is a sequence
$P_{G}\left(  u,v\right)  =\left(  v_{0},v_{1},\ldots,v_{n}\right)  $ of
pairwise distinct (with the possible exception $v_{0}=v_{n}$) vertices from
$V$ such that $u=v_{0}$, $v=v_{n}$, and $\left(  v_{i},v_{i+1}\right)  \in E$
for $i\in\left\{  0,1,\ldots,n-1\right\}  $; $n$ is the \textsl{length} of the
path $P_{G}\left(  u,v\right)  $. We will identify the path $\left(
v_{0},v_{1},\ldots,v_{n}\right)  $ with the subgraph $\left(  \left\{
v_{0},v_{1},\ldots,v_{n}\right\}  ,\left\{  \left(  v_{i},v_{i+1}\right)
|i\in\left\{  0,1,\ldots,n-1\right\}  \right\}  \right)  $. The
\textsl{distance }\ $d\left(  u,v\right)  $ between vertices $u,v$ is the
length of the shortest path from $u$ to $v$. The \textsl{diameter} of the
graph is the maximum distance of two nodes in the graph. A \textsl{cycle }is a
path from $u$ to $u$. A graph is \textsl{connected} if, for any $u,v\in V$,
there is a path from $u$ to $v$ in $G$. A graph is acyclic if it does not contain a cycle of
positive length. A connected acyclic graph is called a \textsl{tree}. In a
tree, there is a unique path between any two vertices. A tree is
\textsl{rooted }if one of its vertices is denoted as its \textsl{root}. The
\textsl{depth }of a rooted tree is the length of the longest path in the tree
starting at the root. Given a graph $G=(V,E)$, a spanning tree of $G$ is a
subgraph $T=\left(  V,E^{\prime}\right)  $ being a tree. It is a well-known
fact that a connected subgraph $\left(  V,E^{\prime}\right)  $ of $G$ is a
spanning tree if and only if $\left\vert E^{\prime}\right\vert =\left\vert
V-1\right\vert $. Spanning trees have applications in network broadcasting in
which a source node sends a message to every other node in the network. A
graph may have a number of different spanning trees. Two rooted spanning trees $T_{1},T_{2}$ of $G=\left(
V,E\right)  $ having the same root $r$ are \textsl{node}
\textsl{independent} if, for any vertex $v\in V$, the only common vertices of
the paths from $r$ to $v$ in $T_{1}$ and $T_{2}$ are $r$ and $v$.

\section{Gaussian Networks\label{SectionGaussNet}}

In this paper, we will deal with a special type of graphs called Gaussian networks. The
name comes from one possible approach to their definition, based on the subset
of complex numbers $\mathbb{Z}\left[  \mathbf{i}\right]  =\left\{
x+y\mathbf{i}|x,y\in\mathbb{Z}\right\}  $ with the usual norm $\left\Vert
x+y\mathbf{i}\right\Vert =x^{2}+y^{2}$. The elements of this set are called
Gaussian integers. It is known that for every $\alpha,\beta\in\mathbb{Z}%
\left[  \mathbf{i}\right]  $, $\alpha\neq0$, there exist unique $q,r\in
Z[\mathbf{i}]$ such that $\beta=q\alpha+r$ with $\left\Vert r\right\Vert
<\left\Vert \alpha\right\Vert $. In analogue with integers, one can write
$\beta\operatorname{mod}\alpha=r$.  A \textsl{Gaussian network}
$G\left(  \alpha\right)  =\left(  \mathbb{Z}\left[  \mathbf{i}\right]
_{\alpha},E\left(  \alpha\right)  \right)  $ is given by a fixed Gaussian
integer $\alpha\neq0$. The set of nodes is the residue class modulo $\alpha$.
The nodes $\beta$ and $\gamma$ are adjacent if and only if $(\beta
-\gamma)\operatorname{mod}\alpha\in\left\{  \pm1,\pm\mathbf{i}\right\}  $.
Various representations of these residue classes are provided in \cite{JordanPotratz}.
Gaussian networks are regular. The degree of each vertex is four. They are
highly symmetric; in fact they are vertex-transitive.
For every pair $u,v\in\mathbb{Z}\left[  \mathbf{i}\right]
_{\alpha},$ there is an automorphism of $G\left(  \alpha\right)  $ mapping
$u$ to $v$. The total number of nodes in the network $G\left(  \alpha\right)
$ is $\left\Vert \alpha\right\Vert $. The distance distribution for Gaussian
networks can be found in~\cite{10.1109/TC.2008.57} where the diameter of the
network $G\left(  x+y\mathbf{i}\right)  $ is equal to $y$ if $\left\Vert
x+y\mathbf{i}\right\Vert $ is even and to $y-1$ otherwise. Gaussian networks are
closely related to circulant graphs. A \textsl{circulant graph} with $N\geq 1$
vertices and two jumps $a,b$, where $1\leq a<b$ is the graph
$C_{N}\left(  a,b\right)$ where $V(C_N(a,b))=\left\{
0,1,\ldots,N-1\right\}  $ and, for $u,v\in V$, $\left(  u,v\right)  \in E(C_N(a,b))$ if
$\left\vert u-v\right\vert \operatorname{mod}N\in\left\{a,b,N-a,N-b\right\}$.
$C_{a^{2}+b^{2}}\left(  a,b\right)  $ is isomorphic to $G(a+bi)$ if and
only if $\gcd\left(  a,b\right)  =1$~\cite{10.1109/TC.2008.57}. A Gaussian
network is \textsl{dense} when it has a maximum number of nodes for a given
diameter $k$. It is shown in \cite{Beivide:1991:ODN:123467.123483} that this
is the case in the graph $C_{k^{2}+\left(  k+1\right)  ^{2}}\left(
k,k+1\right)  $. In this paper, we deal with these dense
diameter-optimal graph, which is isomorphic to the Gaussian network $G\left(  \alpha_{k}\right)  $, where
$\alpha_{k}=k+\left(  k+1\right)  \mathbf{i}$, since $\gcd (k,k+1) = 1$.
We denote $G_{k}=\left(
V_{k},E_{k}\right)  =G\left(  \alpha_{k}\right)  $, $k\geq1$.
We will call an
edge in $G_{k}$ to be \textsl{horizontal} if it is of the form $\left(
u,\left(  u\pm1\right)  \operatorname{mod}\alpha_{k}\right)  $ and
\textsl{vertical} if it is of the form $\left(  u,\left(  u\pm\mathbf{i}\right)
\operatorname{mod}\alpha_{k}\right)  $.
A path in $G_k$ is \textsl{horizontal (vertical)} if does not contain a \textsl{vertical (horizontal)} edge.
We depict the graph $G_{k}$ the
usual way the complex numbers are depicted in the Cartesian plane. The vertex
$a+b\mathbf{i}$, $\left\vert a\right\vert +\left\vert b\right\vert \leq k$,
will be positioned at point $\left\langle a,b\right\rangle $. The graph
$G_{3}$ is depicted in Fig.~\ref{34graph}.

\begin{figure}[H]
\centering
\includegraphics[scale=0.85]{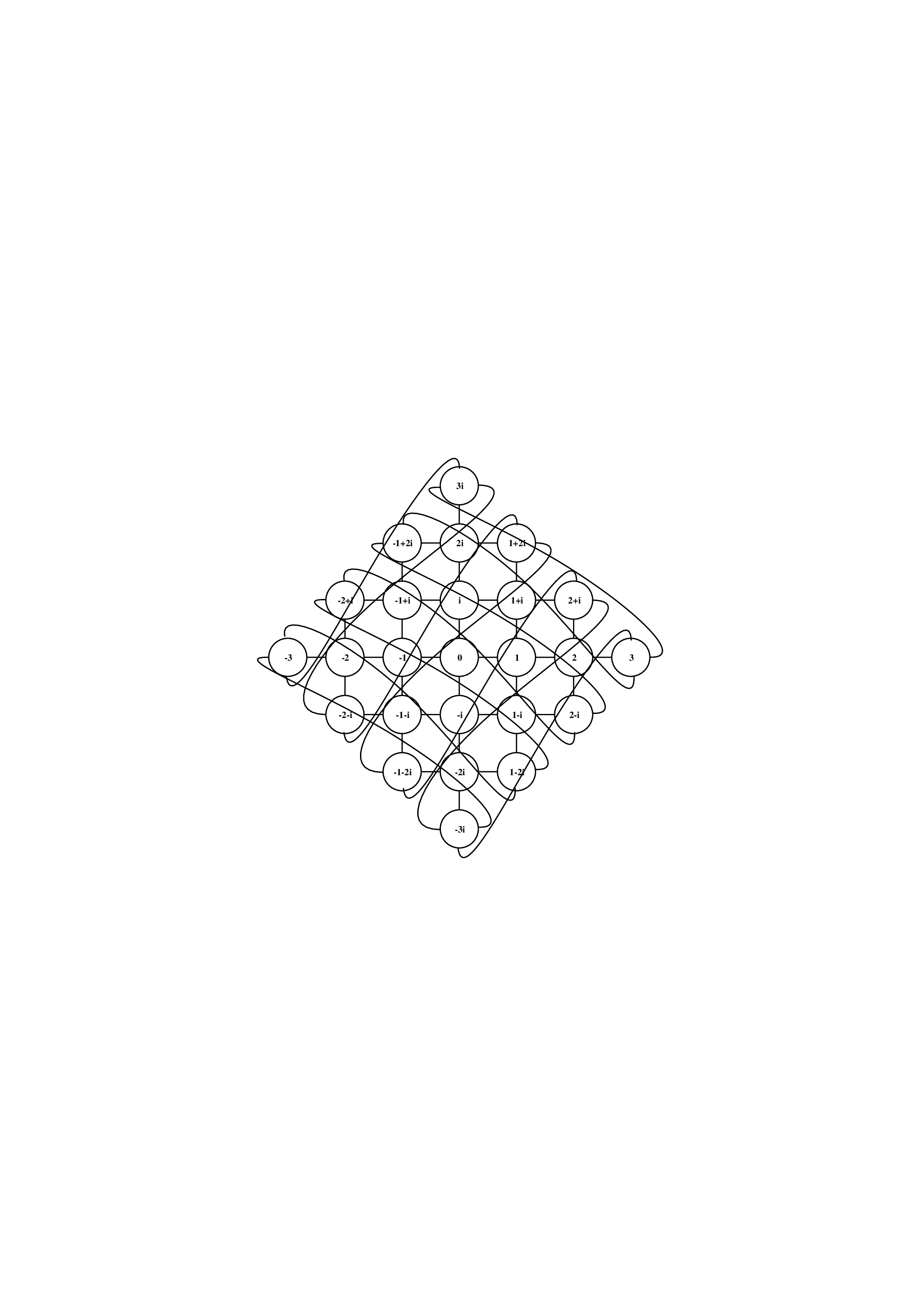} \caption{Gaussian network generated with $\alpha = 3+4i$.}%
\label{34graph}%
\end{figure}

\section{Edge-disjoint node-independent spanning trees in Gaussian
networks\label{SectionEDNIST}}

We will present here a solution to the problem of finding a set of
edge-disjoint node-independent spanning trees in the Gaussian network $G_{k}$.

\bigskip
\begin{proposition}
\label{SpanningSet}Let $\mathcal{T}_{k}$ be a set of edge-disjoint spanning
trees in $G_{k}$, then $\left\vert \mathcal{T}_{k}\right\vert \leq~2$.
\end{proposition}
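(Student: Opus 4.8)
The plan is to argue by a straightforward edge-counting (double-counting) argument that exploits the fact that $G_k$ is $4$-regular. First I would record the two basic quantities involved: the number of vertices $N:=\lVert\alpha_k\rVert=k^{2}+(k+1)^{2}$, and the number of edges. Since $G_k$ is regular of degree $4$ (as noted in Section~\ref{SectionGaussNet}) and, for $k\ge 1$, is a simple graph, the handshake identity gives $|E_k|=\tfrac12\cdot 4N=2N$.

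Next I would invoke the characterization of spanning trees already quoted in Section~\ref{SectionPrelim}: every spanning tree of $G_k$ has exactly $N-1$ edges. If $\mathcal{T}_k=\{T_1,\dots,T_t\}$ is a family of pairwise edge-disjoint spanning trees, then their edge sets are pairwise disjoint subsets of $E_k$, so summing cardinalities yields $t(N-1)=\sum_{j=1}^{t}|E(T_j)|\le |E_k|=2N$.

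Rearranging gives $t\le \frac{2N}{N-1}=2+\frac{2}{N-1}$, and it then remains only to observe that $N$ is large enough to force the right-hand side below $3$: since $k\ge 1$ we have $N=k^{2}+(k+1)^{2}\ge 5$, hence $\frac{2}{N-1}\le\frac12<1$ and therefore $t<3$, i.e. $|\mathcal{T}_k|\le 2$.

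I do not anticipate a genuine obstacle here; the argument is elementary once the degree and the edge count are in hand. The only points that require a moment's care are (i) confirming that $G_k$ has no loops or parallel edges for every $k\ge 1$, so that $|E_k|=2N$ truly holds, and (ii) checking the smallest case $k=1$ (where $N=5$ and the bound reads $t\le 2.5$), so that the strict inequality $\frac{2}{N-1}<1$ is not vacuous. Both are immediate, and the bound is sharp — constructing a matching pair of edge-disjoint (indeed node-independent) spanning trees is precisely the task of the remainder of this section.
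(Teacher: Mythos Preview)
Your argument is correct and is essentially the same edge-counting proof given in the paper: both compute $|V_k|=2k^2+2k+1$, note each spanning tree has $2k^2+2k$ edges, compare with $|E_k|=4k^2+4k+2$, and conclude at most two edge-disjoint spanning trees fit. The only cosmetic difference is that you derive $|E_k|=2N$ from $4$-regularity and carry out the inequality $t\le 2+\tfrac{2}{N-1}<3$ explicitly, whereas the paper just plugs in the numerical values.
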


\begin{proof}
The total number of nodes in $G_{k}=2k^{2}+2k+1$. Hence, each spanning tree in
$\mathcal{T}_{k}$ must have exactly $2k^{2}+2k$ edges. Since the total number
of edges in $G_{k}=4k^{2}+4k+2$, and the trees in $\mathcal{T}_{k}$ are edge
disjoint, it follows that $\left\vert \mathcal{T}_{k}\right\vert \leq2$.
\end{proof}

\bigskip
In our further considerations, we will use automorphisms $\rho$ and $\sigma$
of the Gaussian network $G_{k}$, $k \geq 2$. The mapping $\rho$ is the
$90^{\circ}$ counterclockwise rotation. It is defined, for $a+b\mathbf{i}\in
V_{k}$, by $\rho\left(  a+b\mathbf{i}\right)  =\mathbf{i}\left(
a+b\mathbf{i}\right)  =-b+a\mathbf{i}$. The mapping $\sigma$ is the symmetry
with respect to the axis of the second and the fourth quadrants. It is defined,
for $a+b\mathbf{i}\in V_{k}$, by $\sigma\left(  a+b\mathbf{i}\right)
=-b-a\mathbf{i}$. The mappings satisfy the relations $\rho^{2}\sigma
=\sigma\rho^{2}$ and $\rho^{4}=\sigma^{2}=\mathbf{1}$, where $\mathbf{1}$ is
the identity mapping. Each of $\rho$ and $\sigma$ maps a horizontal edge to a
vertical one and vice versa. We will now describe two subgraphs of $G_{k}$:
the \textquotedblleft black subgraph\textquotedblright\ $B_{k}$, and the
\textquotedblleft red subgraph\textquotedblright\ $R_{k}$, being our
candidates for the two edge-disjoint node-independent spanning trees.
Each of $B_k$ and $R_k$ is the union of several component subgraphs of $G_k$.
These components will be obtained by applying the mappings $\rho$ and $\sigma$ to
the following graphs:
\newline\newline
1. The array subgraph $A$:
\newline$V\left(  A\right)  =\left\{  a+b\mathbf{i|}0\leq a\leq k-1,1\leq
b\leq k,a+b\leq k\right\}  ,$
\newline
$E\left(  A\right)  =\left\{  \left(
a+b\mathbf{i},\left(  a+1\right)  +b\mathbf{i}\right)  |0\leq a\leq k-2,1\leq
b\leq k-1,a+b\leq k-1\right\}  $; $A$ contains all possible horizontal edges
among $V(A)$. \newline The edge size of $A\ $is $\left\vert E\left(  A\right)
\right\vert =k\left(  k-1\right)  /2$.
\newline\newline
2. The baseline subgraph $B$:
\newline$V\left(  B\right)  =\left\{  a\mathbf{|}0\leq a\leq
k\right\}  ,$ \newline$E\left(  B\right)  =\left\{  \left(  a,\left(
a+1\right)  \right)  |0\leq a\leq k-1\right\}  $; $B$ contains all possible
horizontal edges among $V(B)$. \newline The edge size of $B\ $is $\left\vert
E\left(  B\right)  \right\vert =k$.
\newline\newline
3. The $\ B_{k}$-specific wrap-around graph $W^{B}$:\newline$V\left(  W^{B}\right)  =\left\{
a+b\mathbf{i}|,\left\vert a+b\right\vert =k\text{, and }0\leq a,1\leq b\text{
or }a\leq0,b\leq-1\right\}  ,$ \newline$E\left(  W^{B}\right)  =\left\{
\left(  a+b\mathbf{i},-\left(  b-1\right)  +\left(  -1-a\right)
\mathbf{i}\right)  |a+b=k,0\leq a,1\leq b\right\}  $; $W^{B}$ contains all
possible horizontal edges among $V(W^{B})$.\newline The edge size of $W^{B}%
\ $is $\left\vert E\left(  W^{B}\right)  \right\vert =k$.
\newline\newline
4. The $\ R_{k}$-specific wrap-around graph $W^{R}$:\newline$V\left(
W^{R}\right)  =\left\{  a+b\mathbf{i}|1\leq a,b\leq0\text{ or }a\leq
0,b\geq1\text{, and }\left\vert a+b\right\vert =k\right\}  ,$\newline$E\left(
W^{R}\right)  =\left\{  \left(  a+b\mathbf{i},b+a\mathbf{i}\right)
|a-b=k,1\leq a,b\leq0\right\}  $; $W^{R}$ contains all possible horizontal
edges among $V(W^{R})$\newline The edge size of $W^{R}\ \left\vert E\left(
W^{R}\right)  \right\vert =k$.

\bigskip
The subgraph $B_{k}$ is defined as: \begin{figure*}[ht!]
    \centering
    \begin{subfigure}[b]{0.5\textwidth}
        \centering
        \includegraphics[scale=0.45]{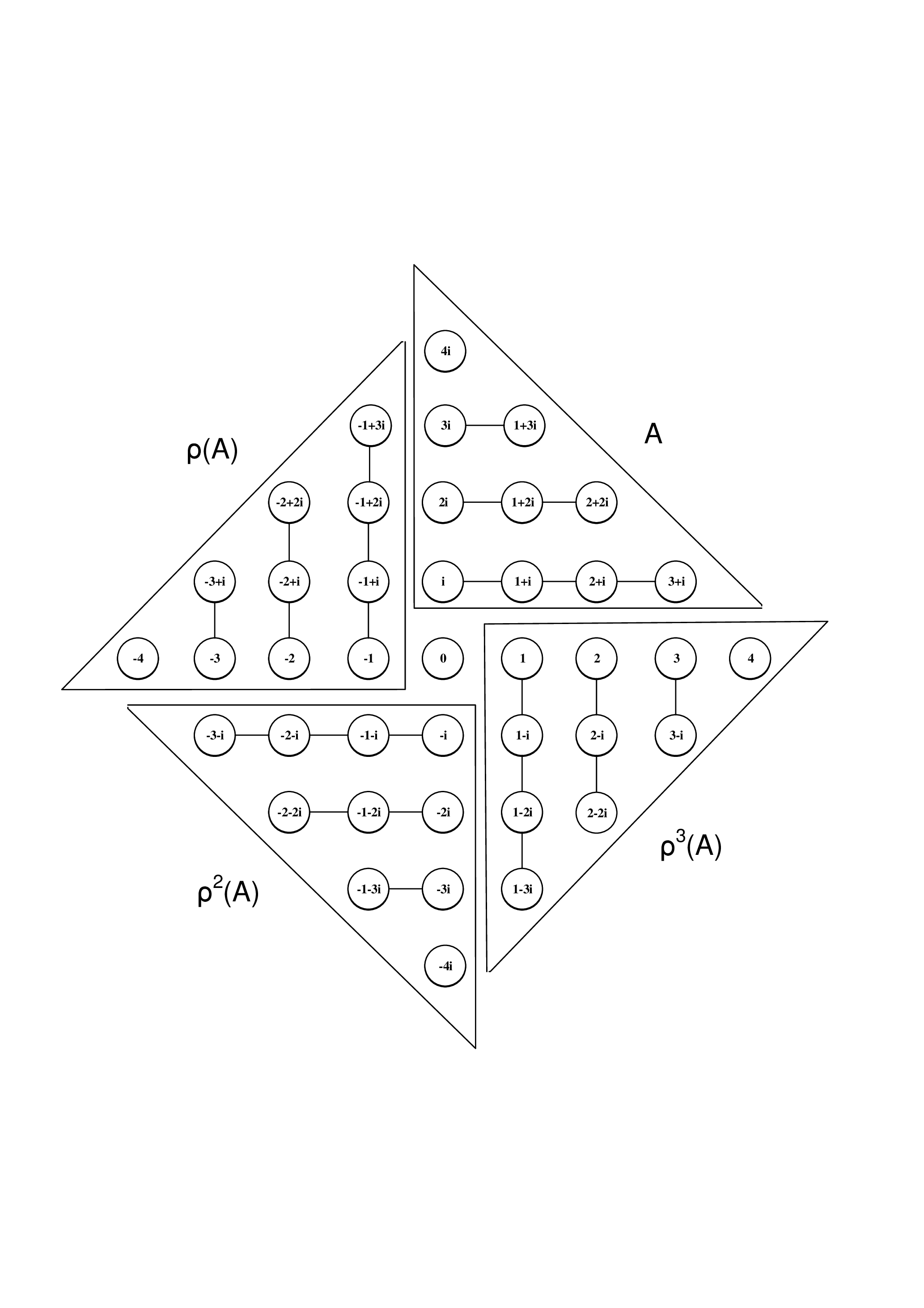}
        \caption{Array graphs}
    \end{subfigure}%
    ~
    \begin{subfigure}[b]{0.5\textwidth}
        \centering
        \includegraphics[scale=0.45]{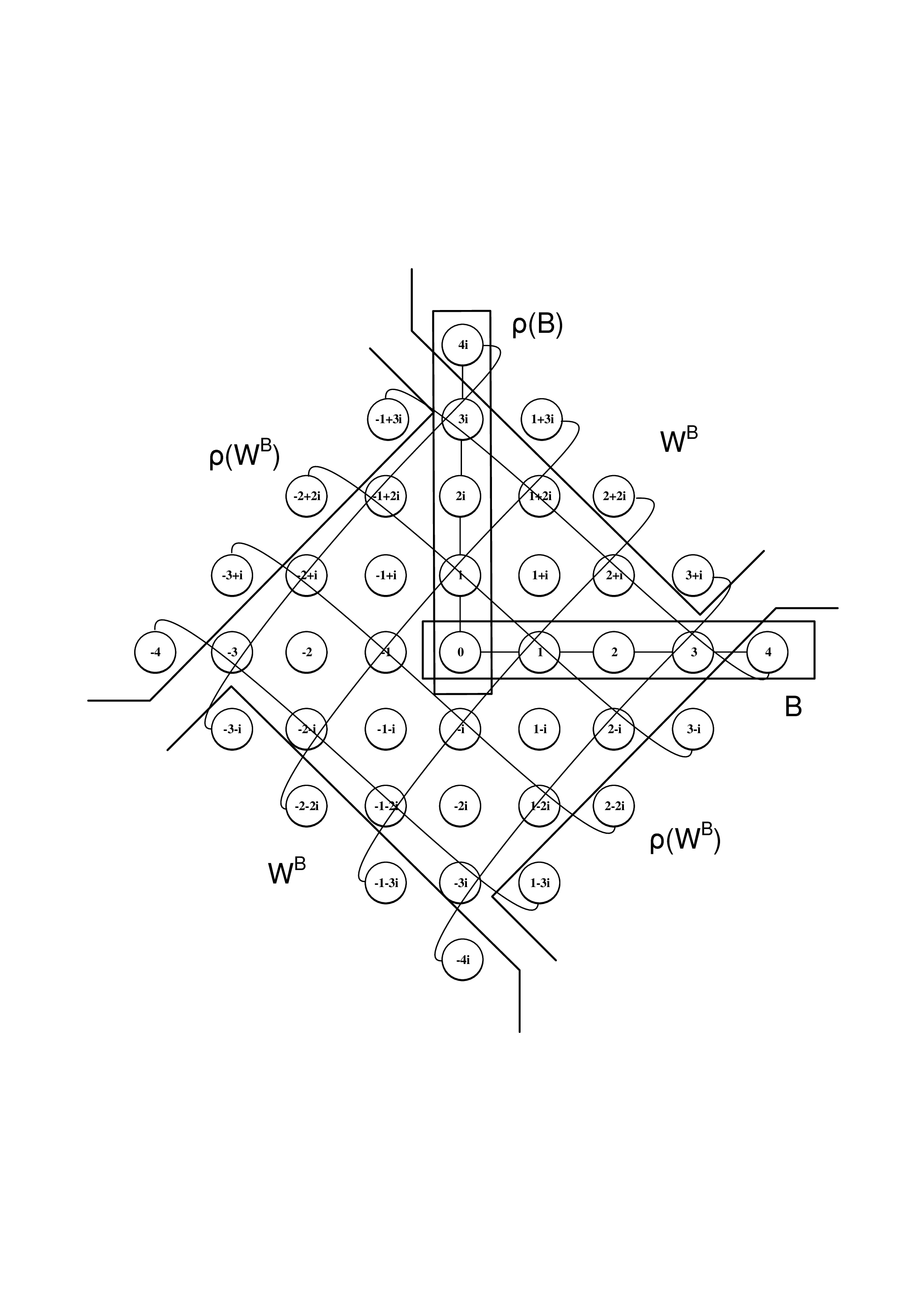}
        \caption{Baseline and $B_4$-specific wrap-around graphs}
    \end{subfigure}
    \caption{$B_4$ components ($k=4$)}
    \label{BlackAll}
\end{figure*}%
\begin{equation}
B_{k}=A\cup\rho\left(  A\right)  \cup\rho^{2}\left(  A\right)  \cup\rho
^{3}\left(  A\right)  \cup B\cup\rho\left(  B\right)  \cup W^{B}\cup
\rho\left(  W^{B}\right)  \label{BU}%
\end{equation}

\bigskip Fig.~\ref{BlackAll} depicts the components of $B_{4}$.

\bigskip The subgraph $R_{k}$ is defined as: \begin{figure*}[t!]
    \centering
    \begin{subfigure}[b]{0.5\textwidth}
        \centering
        \includegraphics[scale=0.45]{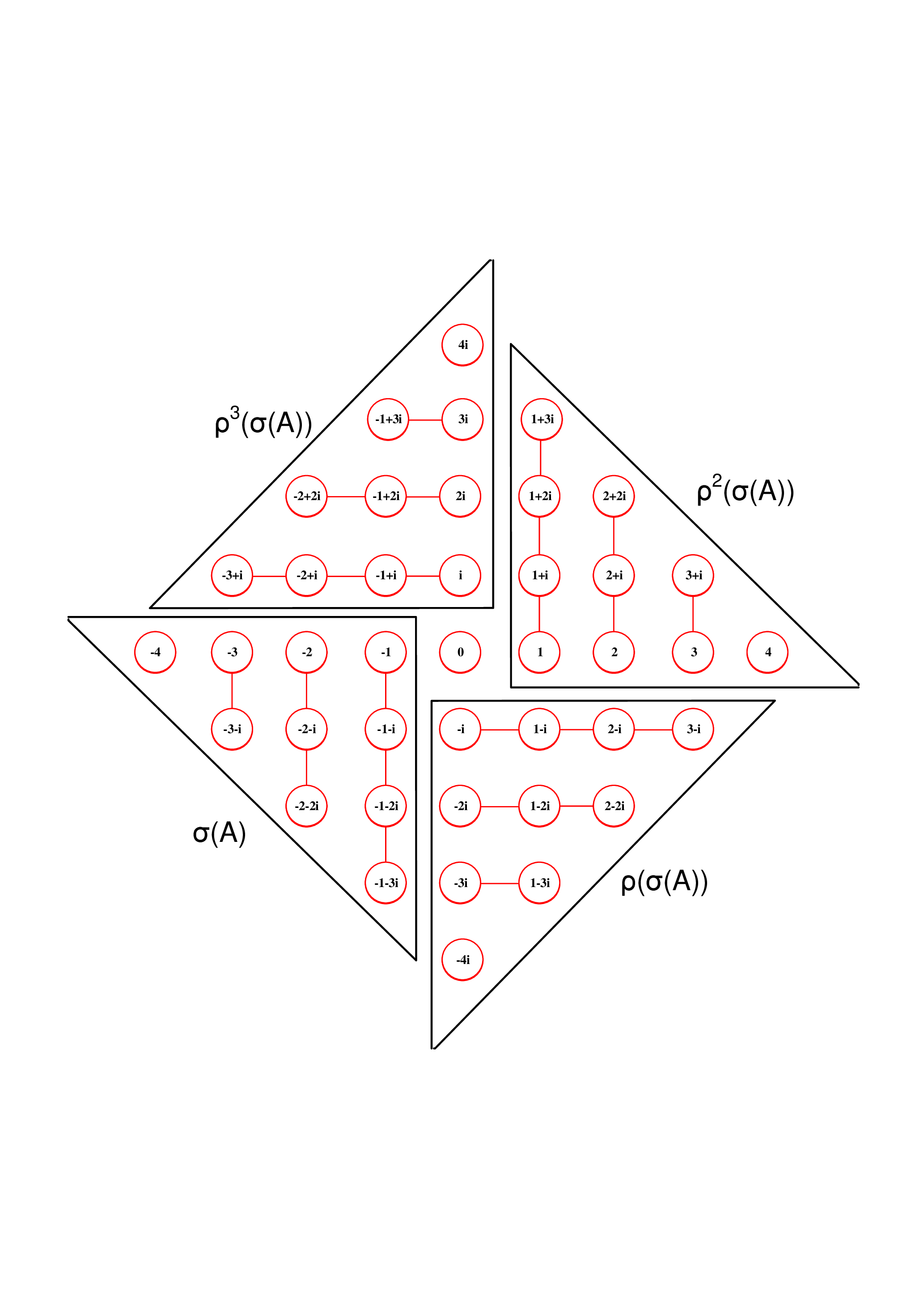}
        \caption{Array graphs}
    \end{subfigure}%
    ~
    \begin{subfigure}[b]{0.5\textwidth}
        \centering
        \includegraphics[scale=0.45]{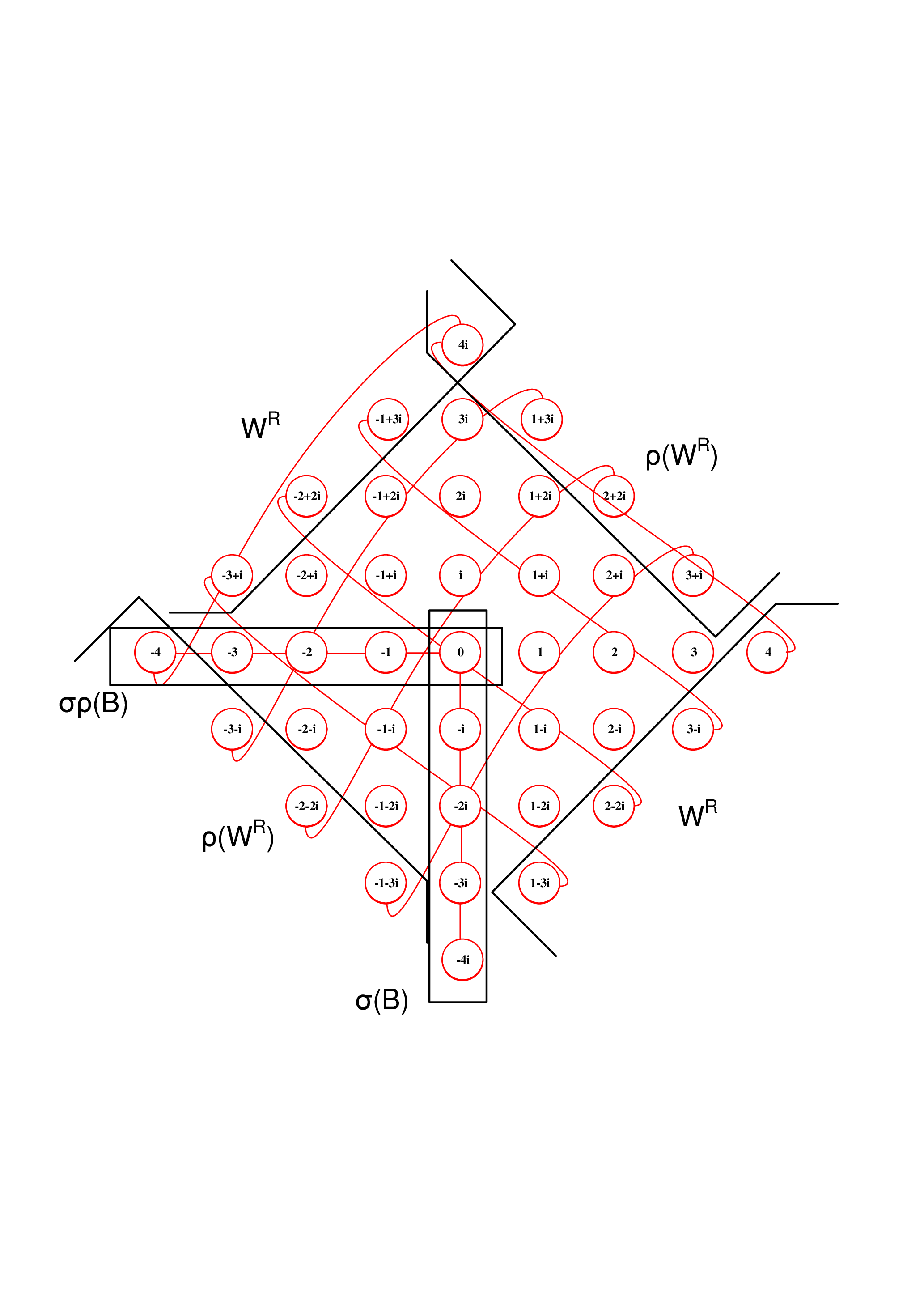}
        \caption{Baseline and $R_4$-specific wrap-around graphs}
    \end{subfigure}
    \caption{$R_4$ components ($k=4$)}
    \label{RedAll}
\end{figure*}%
\begin{equation}
R_{k}=\sigma\left(  A\right)  \cup\rho\sigma\left(  A\right)  \cup\rho
^{2}\sigma\left(  A\right)  \cup\rho^{3}\sigma\left(  A\right)  \cup
\sigma\left(  B\right)  \cup\sigma\rho\left(  B\right)  \cup W^{R}\cup
\rho\left(  W^{R}\right)  \label{RU}%
\end{equation}

\bigskip Fig.~\ref{RedAll} shows the components of $R_{4}$, and Fig.~\ref{45BlackRedST} limns $B_4$ and $R_4$.

\bigskip
\begin{proposition}
\label{BasicRB} $B_{k}$ and $R_{k}$ have the following basic properties:
\newline
1. The vertex set of each of the graphs $B_{k}$ and $R_{k}$ is $V_{k}$.
\newline
2. The edge sets of all components of $B_{k}$, as stated in
$(\ref{BU})$ are pairwise disjoint. The edge sets of all components of $R_{k}$,
as stated in $(\ref{RU})$ are pairwise disjoint.
\newline
3. Each of the graphs $B_{k}$ and $R_{k}$ consists of $2k^{2}+2k$ edges.
\newline
4. The edge sets of $B_{k}$ and $R_{k}$ are disjoint.
\end{proposition}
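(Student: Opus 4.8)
The plan is to establish the four items in order, working in the ``diamond'' model $V_k=\{a+b\mathbf{i}:|a|+|b|\le k\}$ of $G_k$ and exploiting two features of the maps $\rho,\sigma$: each is an automorphism that interchanges horizontal and vertical edges, and each preserves the diamond, hence sends a \emph{short} edge (one whose endpoints differ by exactly $\pm1$ or $\pm\mathbf{i}$) to a short edge and a \emph{wrap-around} edge (endpoints differing by $\pm1$ or $\pm\mathbf{i}$ only modulo $\alpha_k$) to a wrap-around edge. For item~1 I would observe that $V(A)=\{a+b\mathbf{i}:a\ge0,\ b\ge1,\ a+b\le k\}$ is one of four congruent triangular quarters of the diamond and that $A,\rho(A),\rho^2(A),\rho^3(A)$ have vertex sets whose union is $V_k\setminus\{0\}$ (indeed a partition of it), the origin being supplied by $B$; so $V(B_k)=V_k$. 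Since $\sigma\rho^{j}=\rho^{-j}\sigma$, the list in $(\ref{RU})$ is again the four rotations of a single triangle $\sigma(A)$ together with a baseline graph ($\sigma(B)$) through the origin, so $V(R_k)=V_k$ as well.

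For item~2 the crucial reduction is to sort the eight summands of $B_k$ by edge orientation: $A,\rho^2(A),B,W^B$ are horizontal and $\rho(A),\rho^3(A),\rho(B),\rho(W^B)$ are vertical, so any horizontal summand is automatically edge-disjoint from any vertical one, and since $\rho$ carries the horizontal list onto the vertical list it is enough to show that $A,\rho^2(A),B,W^B$ are pairwise edge-disjoint. I would do this by locating the edges by row (common imaginary part of the endpoints): the edges of $A$ lie in rows $1\le b\le k-1$, those of $\rho^2(A)$ in rows $-(k-1)\le b\le-1$, and those of $B$ in row $0$, while every edge of $W^B$ is a wrap-around edge and so equals no (short) edge of $A,\rho^2(A),B$. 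The same argument settles $R_k$, since its array and baseline summands form exactly the $\sigma$-image of those of $B_k$ (again via $\sigma\rho^{j}=\rho^{-j}\sigma$) and $W^R,\rho(W^R)$ are wrap-around. Item~3 is then an immediate count from item~2: $|E(B_k)|=4\cdot\frac{k(k-1)}{2}+|E(B)|+|E(\rho(B))|+|E(W^B)|+|E(\rho(W^B))|=(2k^2-2k)+4k=2k^2+2k$, and likewise $|E(R_k)|=2k^2+2k$ because every summand of $R_k$ is an automorphic image of one of the graphs $A,B,W^R,\rho(W^R)$, of the same edge size.

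Item~4 is the heart of the matter. Splitting both edge sets by orientation and using that $\rho$ carries the horizontal edges of $B_k$ onto its vertical edges (and the same for $R_k$), it suffices to prove that the horizontal edges of $B_k$ are disjoint from those of $R_k$. A horizontal short edge never equals a horizontal wrap-around edge, so this breaks into two pieces. For the short edges: every horizontal short edge lies entirely within one row, and in a row $b\ge1$ the horizontal short edges of $B_k$ are just those of $A$, which occupy columns $a\ge0$, whereas those of $R_k$ are those of $\rho^{3}\sigma(A)$ --- the reflection of $A$ across the imaginary axis --- which occupy columns $a\le0$; the two edge sets meet only at the node $b\mathbf{i}$ and so are disjoint. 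In row $0$, $B_k$ uses the edges of $B$ (nonnegative real axis) and $R_k$ those of $\sigma\rho(B)$ (nonpositive real axis); in a row $b\le-1$, $B_k$ uses $\rho^2(A)$ (columns $a\le0$) and $R_k$ uses $\rho\sigma(A)$ (columns $a\ge0$); again complementary. For the wrap-around edges: the edges of $W^B$ join the NE boundary segment $\{a+b\mathbf{i}:a+b=k,\ b\ge1\}$ of the diamond to the SW segment, while the edges of $W^R$ join the SE segment to the NW one; the only node incident both to an edge of $W^B$ and to an edge of $W^R$ is the corner $k\mathbf{i}$, and there the two edges run to the distinct nodes $-(k-1)-\mathbf{i}$ and $k$. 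Hence $E(W^B)\cap E(W^R)=\emptyset$, and applying $\rho$ gives $E(\rho(W^B))\cap E(\rho(W^R))=\emptyset$ too.

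All the genuine work is the bookkeeping in items~2 and~4: for each summand one must write down explicitly the set of edges --- indexed by the row it occupies and the interval of columns it spans --- and verify that whenever two such regions abut along a coordinate axis they share a node but never an edge. Everything else is a symmetry argument or an edge count, and the one place to be careful is precisely these axis-adjacency cases.
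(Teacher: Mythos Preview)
Your proof is correct and follows the same high-level strategy as the paper---using the horizontal/vertical orientation of edges as the primary separating invariant, then counting---but you carry it out more carefully. The paper's argument for item~2 is the single sentence ``If $C_{1},C_{2}$ are not node-disjoint then all edges in one of these components are horizontal and all edges in the other one are vertical,'' and its argument for item~4 is the analogous claim across $B_k$ and $R_k$. Strictly speaking that claim is false: $A$ and $W^{B}$ share all the NE boundary vertices $\{a+b\mathbf{i}:a+b=k,\ a\ge0,\ b\ge1\}$ and are both horizontal, so the paper's dichotomy does not apply to this pair (and similarly for $W^{R}$ versus the array summands of $R_k$, and for $W^{B}$ versus $W^{R}$ in item~4). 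The conclusion still holds because the edges of $A$ are short while those of $W^{B}$ are wrap-around, but the paper never says this. Your explicit introduction of the short/wrap-around distinction, together with the row-by-row localisation of the short edges, is exactly what is needed to close these cases; the trade-off is that the paper's argument is two lines where yours is a page of bookkeeping, but yours actually covers all the component pairs.
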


\begin{proof}
\newline
1. The following can be observed from the construction of the
subgraphs $B_{k}$ and $R_{k}$:%
\begin{align*}
V_{k}  &  =V\left(  A\right)  \cup V\left(  \rho\left(  A\right)  \right)
\cup V\left(  \rho^{2}\left(  A\right)  \right)  \cup V\left(  \rho^{3}\left(
A\right)  \right)  \cup V\left(  B\right) \\
V_{k}  &  =V\left(  \sigma\left(  A\right)  \right)  \cup V\left(  \rho
\sigma\left(  A\right)  \right)  \cup V\left(  \rho^{2}\sigma\left(  A\right)
\right)  \cup V\left(  \rho^{3}\sigma\left(  A\right)  \right)  \cup V\left(
\sigma\left(  B\right)  \right)
\end{align*}
\newline\newline
2. Let $C_{1}, C_{2}$ be two distinct components of $B_{k}$.
If $C_{1},C_{2}$ are not node-disjoint then all edges in one of these
components are horizontal and all edges in the other one are vertical. A
similar argument is valid for $R_{k}$.
\newline\newline
3. Since the components
of $B_{k}$ are edge-disjoint, the edge-size of $B_k$
is the sum of the edge-sizes of the components. The mappings $\rho,\sigma$
preserve the edge-size of a graph. The same argument applies to $R_k$.
Therefore, a summation of the edge-disjoint component sizes yields:%
\begin{align*}
\left\vert E\left(  B_{k}\right)  \right\vert  &  =4\left\vert E\left(
A\right)  \right\vert +2\left\vert E\left(  B\right)  \right\vert +2\left\vert
E\left(  W^{B}\right)  \right\vert \\
&  =4\Big[\frac{k\left(  k-1\right)  }{2}\Big]+2k+2k=2k^{2}+2k\\
\left\vert E\left(  R_{k}\right)  \right\vert  &  =4\left\vert E\left(
A\right)  \right\vert +2\left\vert E\left(  B\right)  \right\vert +2\left\vert
E\left(  W^{R}\right)  \right\vert \\
&  =4\Big[\frac{k\left(  k-1\right)  }{2}\Big]+2k+2k=2k^{2}+2k\text{.}
\end{align*}
\newline\newline
4. Comparing any component of $B_{k}$ to any component of
$R_{k}$, one can observe that such pair of components either does not contain a
common pair of vertices or the direction of the edges in the two components
are different. Hence, $B_{k}$ and $R_{k}$ are edge-disjoint.
\end{proof}

\begin{figure}[H]
\centering
\includegraphics[scale=0.95]{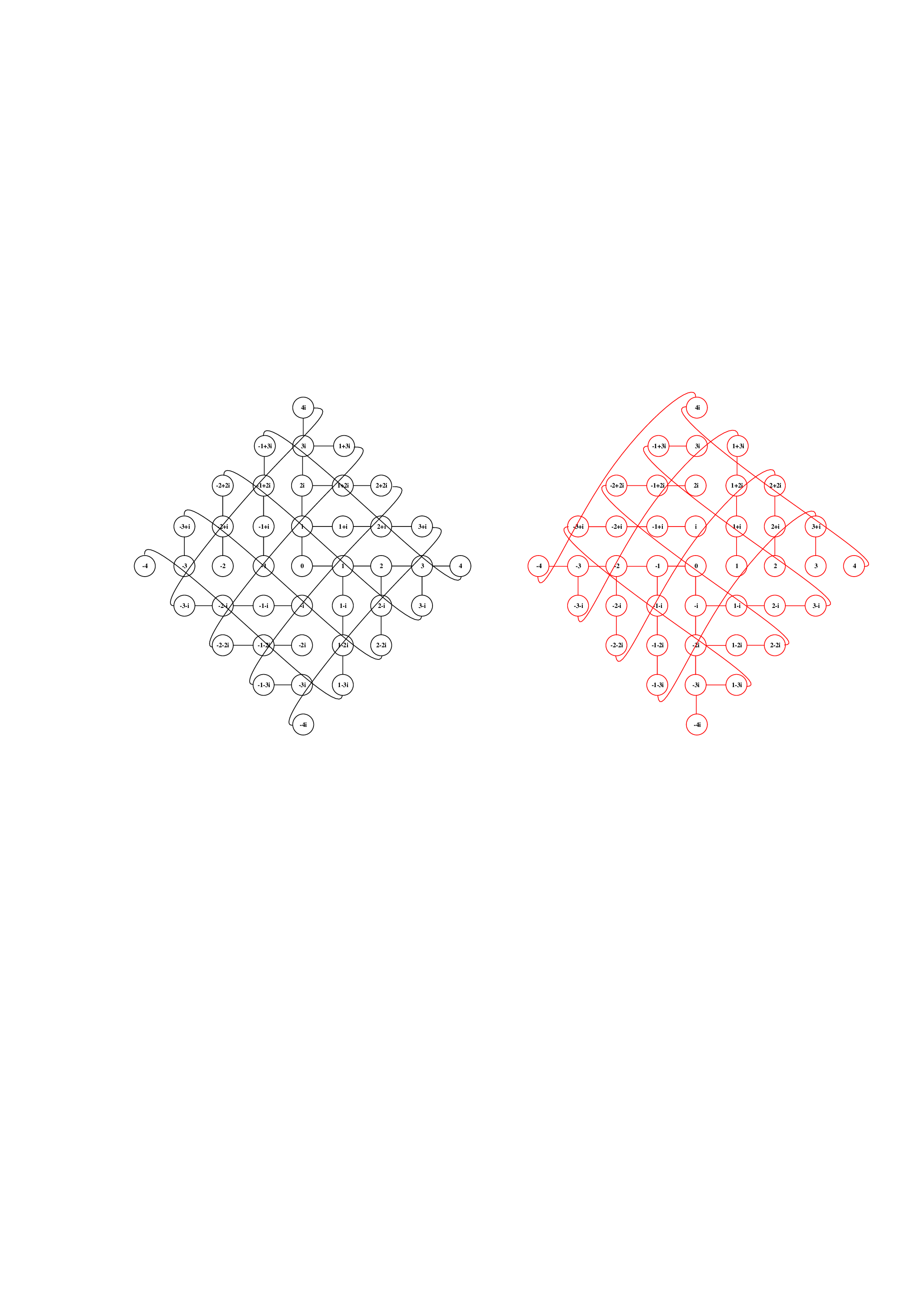} \caption{The trees $B_{4}$ and $R_{4}$}%
\label{45BlackRedST}%
\end{figure}

\bigskip
\begin{lemma}
\label{Connected} Each of the graphs $B_{k}$ and $R_{k}$ is a connected graph.
\end{lemma}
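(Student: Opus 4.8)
The plan is to prove connectivity of $B_k$ by a ``backbone plus bridges'' argument, and then obtain the same conclusion for $R_k$ by re-running the same bookkeeping on its (congruently arranged) components. By part~3 of Proposition~\ref{BasicRB} this will in fact upgrade to ``spanning tree,'' but here only connectivity is needed, so I keep the focus there.

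First I would record the internal structure of each component of $B_k$. Since $A$ contains all horizontal edges on $V(A)$, and $V(A)$ splits into the rows $b=1,\dots,k$ (row $b$ being the vertices $a+b\mathbf{i}$ with $0\le a\le k-b$), the graph $A$ is a disjoint union of $k$ horizontal paths $A_1,\dots,A_k$, with $A_k$ the single vertex $k\mathbf{i}$; applying $\rho$ repeatedly, each $\rho^{j}(A)=\bigcup_b \rho^{j}(A_b)$ is likewise $k$ disjoint paths, and the four blocks $\rho^{j}(A)$ lie in the four ``quadrant'' regions and are pairwise vertex-disjoint (read off from their coordinate ranges). Similarly $B$ is a single horizontal path on $0,1,\dots,k$ of the real axis, $\rho(B)$ a single vertical path on $0,\mathbf{i},\dots,k\mathbf{i}$ of the imaginary axis, and each of $W^{B}$, $\rho(W^{B})$ is a perfect matching, i.e.\ a disjoint union of $k$ single edges. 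So every component of $B_k$ is connected, and the whole question is whether these $4k+2$ paths and $2k$ matching-edges link up into one piece.

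Next I would exhibit the links explicitly. The paths $B$ and $\rho(B)$ share the origin, hence lie in a common component $\mathcal{C}$. For each $j\in\{1,\dots,k\}$, the vertex $j$ lies on $B$ and is an endpoint of $\rho^{3}(A_j)$, so every $\rho^{3}(A_j)$ joins $\mathcal{C}$; symmetrically $j\mathbf{i}$ lies on $\rho(B)$ and is an endpoint of $A_j$, so every $A_j$ joins $\mathcal{C}$. It remains to attach the blocks $\rho(A)$, $\rho^{2}(A)$ and the two matchings. Matching up coordinates, I would check that each edge of $W^{B}$ joins an endpoint of some $A_b$ to an endpoint of some $\rho^{2}(A_{b'})$ (with the indices running over all of $\{1,\dots,k\}$), so that $W^{B}$ and all $\rho^{2}(A_b)$ enter $\mathcal{C}$; and that each edge of $\rho(W^{B})$ joins an endpoint of some $\rho(A_b)$ to an endpoint of some $\rho^{3}(A_{b'})$, so that $\rho(W^{B})$ and all $\rho(A_b)$ enter $\mathcal{C}$. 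Thus $\mathcal{C}$ contains every component of $B_k$, and since $V(B_k)=V_k$ by Proposition~\ref{BasicRB}, $B_k$ is connected. For $R_k$ the argument is identical in shape: the four array-blocks $\rho^{j}\sigma(A)$ are again $k$ disjoint paths apiece in the four quadrant regions, $\sigma(B)$ and $\sigma\rho(B)$ are paths along the negative real and negative imaginary axes meeting at the origin and forming the backbone, and $W^{R}$, $\rho(W^{R})$ are perfect matchings whose edges bridge the remaining array-blocks onto the backbone exactly as $W^{B}$, $\rho(W^{B})$ do for $B_k$; one simply re-does the same coordinate check.

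The main obstacle is that coordinate check itself: for each wrap-around edge one must correctly identify \emph{which} array-path endpoints it joins, verifying that the index substitutions (such as $b\mapsto k-a$ and $b\mapsto a+1$ for the two ends of a $W^{B}$ edge) land inside the valid range $\{1,\dots,k\}$ and hit every block. The wrap-around edges are precisely the ones that ``close up'' the diamond, so an error would hide there; everything else is immediate once each component is seen to be a path or a single edge.
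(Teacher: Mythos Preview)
Your proof is correct and follows essentially the same approach as the paper: form the backbone $B\cup\rho(B)$ through the origin, attach $A$ and $\rho^{3}(A)$ to it along the axes, and then bring in $\rho^{2}(A)$ and $\rho(A)$ via the wrap-around matchings $W^{B}$ and $\rho(W^{B})$, with the analogous argument for $R_k$. The only difference is the level of detail---you make the row-path decomposition of $A$ and the index bookkeeping for the wrap-around edges explicit, whereas the paper states these connections summarily.
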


\begin{proof}
First, we will show that $B_{k}$ is connected. $B$ is connected by
construction. $\rho(B)$ is a rotation of B, and hence, it is connected.
$B \cup \rho(B)$ is connected since $B\cap\rho(B)=\left\{  0\right\}  \neq\phi$.
There exist paths between:
every node in $A$ and some node in $\rho(B)\subset A$,
every node in $\rho^{3}(A)$ and some node in $B\subset\rho^{3}(A)$,
every node in $\rho(A)$ and some node in $\rho^{3}(A)$ through $\rho(W^{B})$,
and every node in $\rho^{2}(A)$ and some node in $A$ through $W^{B}$.
Thus, $B_{k}$ is connected.
A similar argument can be used to show that $R_k$ is connected.
\end{proof}

\bigskip
\begin{corollary}
\label{EDST} $B_{k}$ and $R_{k}$ are edge-disjoint spanning trees in $G_{k}$.
\end{corollary}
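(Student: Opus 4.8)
The plan is to read the corollary off directly from the three results just established, using the elementary characterization of spanning trees recorded in Section~\ref{SectionPrelim}: a connected spanning subgraph $(V,E')$ of a graph $G=(V,E)$ is a spanning tree precisely when $|E'|=|V|-1$. So for each of $B_k$ and $R_k$ it suffices to verify three things — that it spans $V_k$, that it is connected, and that it has exactly $|V_k|-1$ edges — and then to check that the two trees share no edge.

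First I would invoke Proposition~\ref{BasicRB}(1), which states that the vertex set of each of $B_k$ and $R_k$ is $V_k$, so both are spanning subgraphs of $G_k$. Next, connectivity of each of $B_k$ and $R_k$ is exactly Lemma~\ref{Connected}. For the edge count, recall (as used in the proof of Proposition~\ref{SpanningSet}) that $|V_k| = 2k^2+2k+1$, while Proposition~\ref{BasicRB}(3) gives $|E(B_k)| = |E(R_k)| = 2k^2+2k = |V_k|-1$. Combining these, each of $B_k$ and $R_k$ is a connected spanning subgraph of $G_k$ with $|V_k|-1$ edges, hence a spanning tree of $G_k$ by the criterion above.

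It then remains only to note that the two trees are edge-disjoint, which is precisely Proposition~\ref{BasicRB}(4). (For context, Proposition~\ref{SpanningSet} shows that two is the maximum possible number of pairwise edge-disjoint spanning trees in $G_k$, so the pair $\{B_k, R_k\}$ is of maximum size.) I do not anticipate any genuine obstacle: essentially all of the work has been front-loaded into Proposition~\ref{BasicRB} and Lemma~\ref{Connected}, and the only point requiring care is that the edge count $2k^2+2k$ equals $|V_k|-1$ exactly — which it does — so that the "connected plus correct number of edges implies tree" criterion is applicable.
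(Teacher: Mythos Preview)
Your proposal is correct and follows essentially the same approach as the paper's proof, which likewise cites Lemma~\ref{Connected} and Proposition~\ref{BasicRB} to conclude that $B_k$ and $R_k$ are connected, edge-disjoint, and have $|V_k|-1$ edges each, hence are edge-disjoint spanning trees. Your write-up is simply more explicit about which parts of Proposition~\ref{BasicRB} are being invoked and adds the (optional) remark about Proposition~\ref{SpanningSet}.
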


\begin{proof}
Following Lemma~\ref{Connected} and Proposition~\ref{BasicRB}, $B_{k}$ and
$R_{k}$ are connected, edge-disjoint, and the edge size of each of them is
$\left\vert V_{k}\right\vert -1$. Therefore, they are edge-disjoint spanning
trees in $G_{k}$.
\end{proof}

\bigskip
In the remaining text we will assume that the spanning trees $B_{k}$ and
$R_{k}$ are rooted at node $0$.
The following three lemmata will be useful in proving our main result expounded
in~Theorem~\ref{BasicProp} and Theorem~\ref{EXTEND}.

Let
$H^{B}=A\cup W^{B}\cup\rho^{2}\left(  A\right)$, and
$H^{R}=\rho\sigma\left(  A\right)  \cup W^{R}\cup\rho^{3}\sigma\left(A\right)$.

\bigskip
\begin{lemma}
\label{QsPathSize}
Each path in $H^{B}$ or in $H^{R}$ is horizontal and not longer than $k$ or $(k+1)$, respectively.
%
\end{lemma}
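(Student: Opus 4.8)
The plan is to pin down the connected components of $H^{B}$ and of $H^{R}$ explicitly: each turns out to be a single horizontal path obtained by splicing two ``rows'' of a rotated or reflected copy of $A$ together with one wrap-around edge, and the length of such a path is then a one-line computation.

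Horizontality is the easy half. Every edge of $A$ is horizontal by definition, and since $\rho$ and $\sigma$ each interchange horizontal and vertical edges (as already noted), the maps $\rho^{2}$, $\rho\sigma$, $\rho^{3}\sigma$ all preserve horizontality; hence every edge of $\rho^{2}(A)$, $\rho\sigma(A)$ and $\rho^{3}\sigma(A)$ is horizontal. For the wrap-around graphs one checks straight from the definitions that each listed edge joins two vertices whose difference is $\pm1$ modulo $\alpha_{k}$: $a+b\mathbf{i}$ and $-(b-1)+(-1-a)\mathbf{i}$ differ by $(a+b-1)+(a+b+1)\mathbf{i}=\alpha_{k}-1\equiv-1$ since $a+b=k$, and $a+b\mathbf{i}$ and $b+a\mathbf{i}$ differ by $(a-b)(1-\mathbf{i})=k(1-\mathbf{i})=-\mathbf{i}\,\alpha_{k}-1\equiv-1$ since $a-b=k$. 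So every edge of $H^{B}$ and of $H^{R}$ is horizontal, and therefore every path in either graph is horizontal; it remains only to bound path lengths.

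For $H^{B}$, write $A$ as the disjoint union of its rows $A_{1},\dots,A_{k}$, where for $1\le m\le k-1$ the row $A_{m}$ is the horizontal path $m\mathbf{i},\,1+m\mathbf{i},\dots,(k-m)+m\mathbf{i}$ (so $|E(A_{m})|=k-m$) and $A_{k}=\{k\mathbf{i}\}$; likewise $\rho^{2}(A)$ is the disjoint union of $\rho^{2}(A_{1}),\dots,\rho^{2}(A_{k})$, lying below the real axis. The set $E(W^{B})$ consists of $k$ edges, one for each $a\in\{0,\dots,k-1\}$, and unwinding the definition the edge for $a$ joins the right endpoint of $A_{k-a}$ to the left endpoint of $\rho^{2}(A_{a+1})$. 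Hence every connected component of $H^{B}$ is a single path $A_{k-a}\cup\{\,W^{B}\text{-edge for }a\,\}\cup\rho^{2}(A_{a+1})$ for some $a\in\{0,\dots,k-1\}$; these $k$ paths are pairwise vertex-disjoint (the $A_{m}$ are disjoint, the $\rho^{2}(A_{m})$ are disjoint, the two families lie on opposite sides of the real axis, and $W^{B}$ is a matching) and together they cover $V(H^{B})$. Each has length $(k-(k-a))+1+(k-(a+1))=k$, so every path in $H^{B}$ is a subpath of a path of length $k$ and thus has length at most $k$.

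For $H^{R}$ the analysis runs the same way with $\rho\sigma(A)$, $\rho^{3}\sigma(A)$ and the $k$ edges listed in $E(W^{R})$, indexed by $a\in\{1,\dots,k\}$: for $a\le k-1$ the edge for $a$ joins the right endpoint of the row $\rho\sigma(A_{k-a})$ to the left endpoint of the row $\rho^{3}\sigma(A_{a})$, yielding a path of length $(k-(k-a))+1+(k-a)=k+1$, while the edge for $a=k$ forms a length-$1$ path joining the real-axis vertex $k$ to $k\mathbf{i}$, and $-k\mathbf{i}$ is left isolated. These pieces are again pairwise vertex-disjoint and exhaust $V(H^{R})$, so every path in $H^{R}$ has length at most $k+1$. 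The only delicate point is the bookkeeping in the last two paragraphs: getting the index matchings right, and correctly handling the degenerate one-vertex rows $A_{k}$, $\rho^{2}(A_{k})$, $\rho\sigma(A_{k})$, $\rho^{3}\sigma(A_{k})$ (and the extra real-axis vertex $k$ in the red case); once the endpoints are paired up, the length bound is just an arithmetic identity.
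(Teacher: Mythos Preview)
Your proof is correct and follows essentially the same approach as the paper: decompose $A$ and its images into horizontal rows, observe that each wrap-around edge of $W^{B}$ (resp.\ $W^{R}$) splices exactly one row from each side into a single horizontal path, and compute the resulting length as $a+1+(k-a-1)=k$ (resp.\ $a+1+(k-a)=k+1$). Your version is more detailed than the paper's---you verify horizontality of the wrap-around edges by an explicit modular computation and you handle the degenerate one-vertex rows and the isolated vertex $-k\mathbf{i}$ explicitly---but the underlying argument is identical.
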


\begin{proof}
The paths being horizontal or vertical follow from the definition of the sets $A,B$ and
$W^{B},W^{R}$. Each path in $A$ or $\rho^{2}\left(  A\right)  $ is of some
length $i$, $i\in\left\{  0,\ldots,k-1\right\} $. Any path in $W^{B}$ is of
length one. A path of length~$i$ in $A$ is connected by an edge in $W^{B}$
to a single path in $\rho^{2}\left(  A\right) $ of length $k-i-1$. The
maximum length of a path in $H^{B}$ is therefore $i+1+\left(  k-i-1\right)=k$.
%
Fig.~\ref{paths1} illustrates $H^B$ paths when $k=4$.
\begin{figure*}[t!]
    \centering
    \begin{subfigure}[b]{0.5\textwidth}
        \centering
        \includegraphics[scale=0.45]{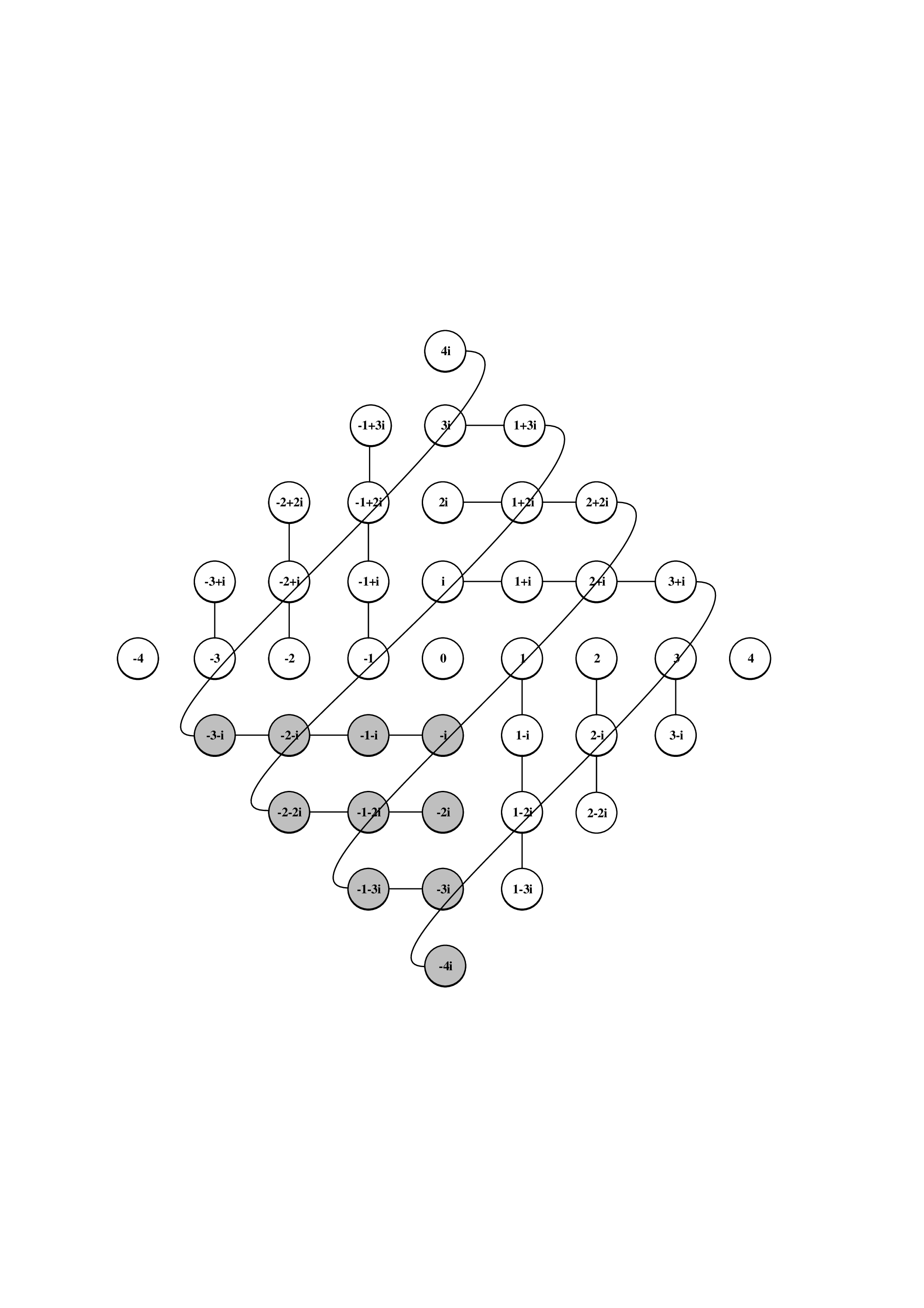}
        \caption{$W^B$ wrapped-around edges}
    \end{subfigure}%
    ~
    \begin{subfigure}[b]{0.5\textwidth}
        \centering
        \includegraphics[scale=0.45]{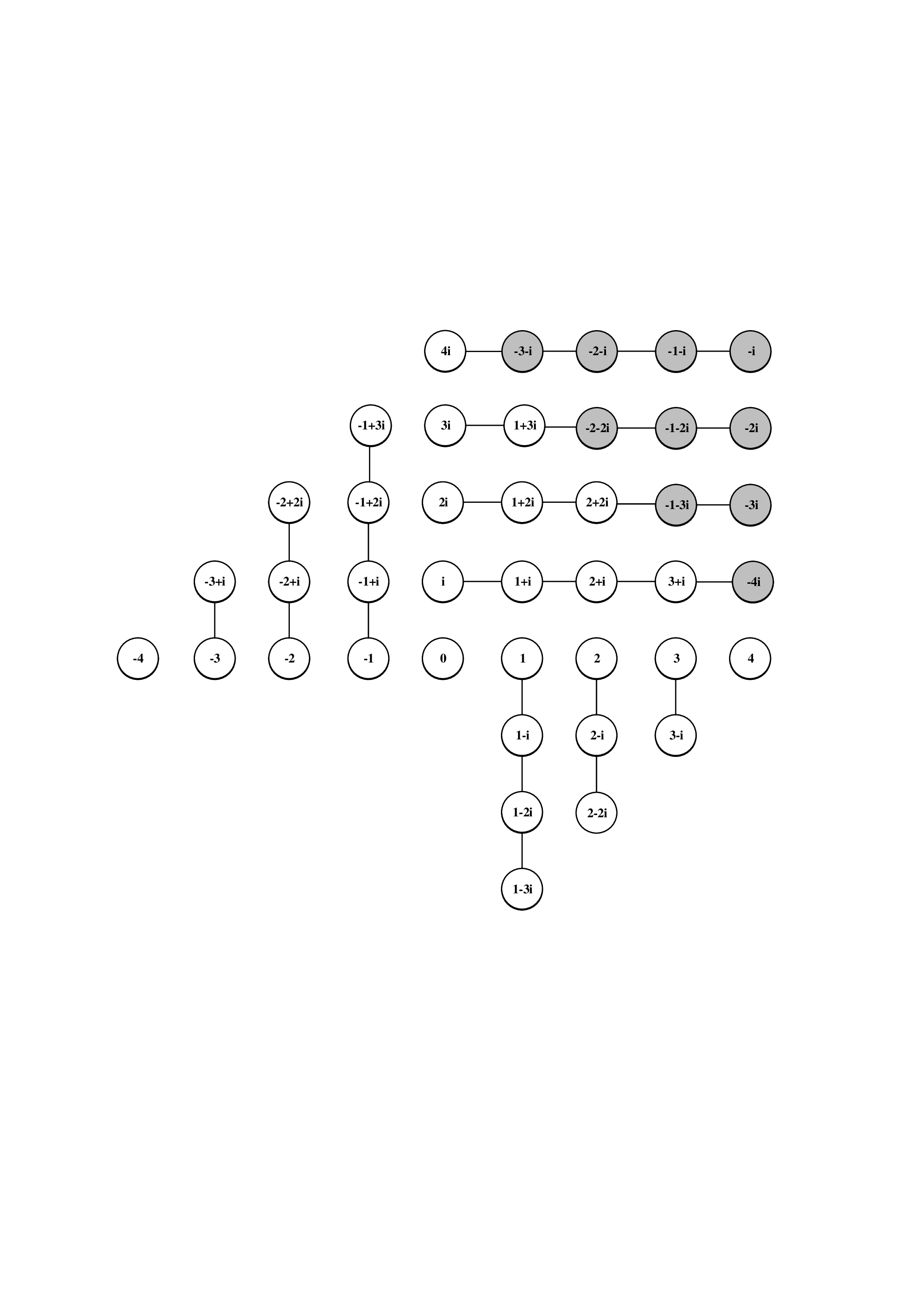}
        \caption{$W^B$ straightened edges}
    \end{subfigure}
    \caption{$H^B$ paths in $B_4$}
    \label{paths1}
\end{figure*}
Each path in $\rho\sigma\left(A\right) $ or in $\rho^{3}\sigma\left(A\right) $
is of some length $i$, $i\in\left\{  0,\ldots,k-1\right\}$.
Any path in $W^{R}$ is of length one.  A path of length $i$ in $\rho\sigma\left(A\right)$
is connected by an edge in $W^{R}$ to a single path in
$\rho^{3}\sigma\left(  A\right)$ of length $k-i$. The maximum length of a path in
$H^{B}$ is therefore $i+1+\left(k-i\right)  =k+1$.
Fig.~\ref{paths2} illustrates $H^R$ paths when $k=4$.
\begin{figure*}[t!]
    \centering
    \begin{subfigure}[b]{0.5\textwidth}
        \centering
        \includegraphics[scale=0.45]{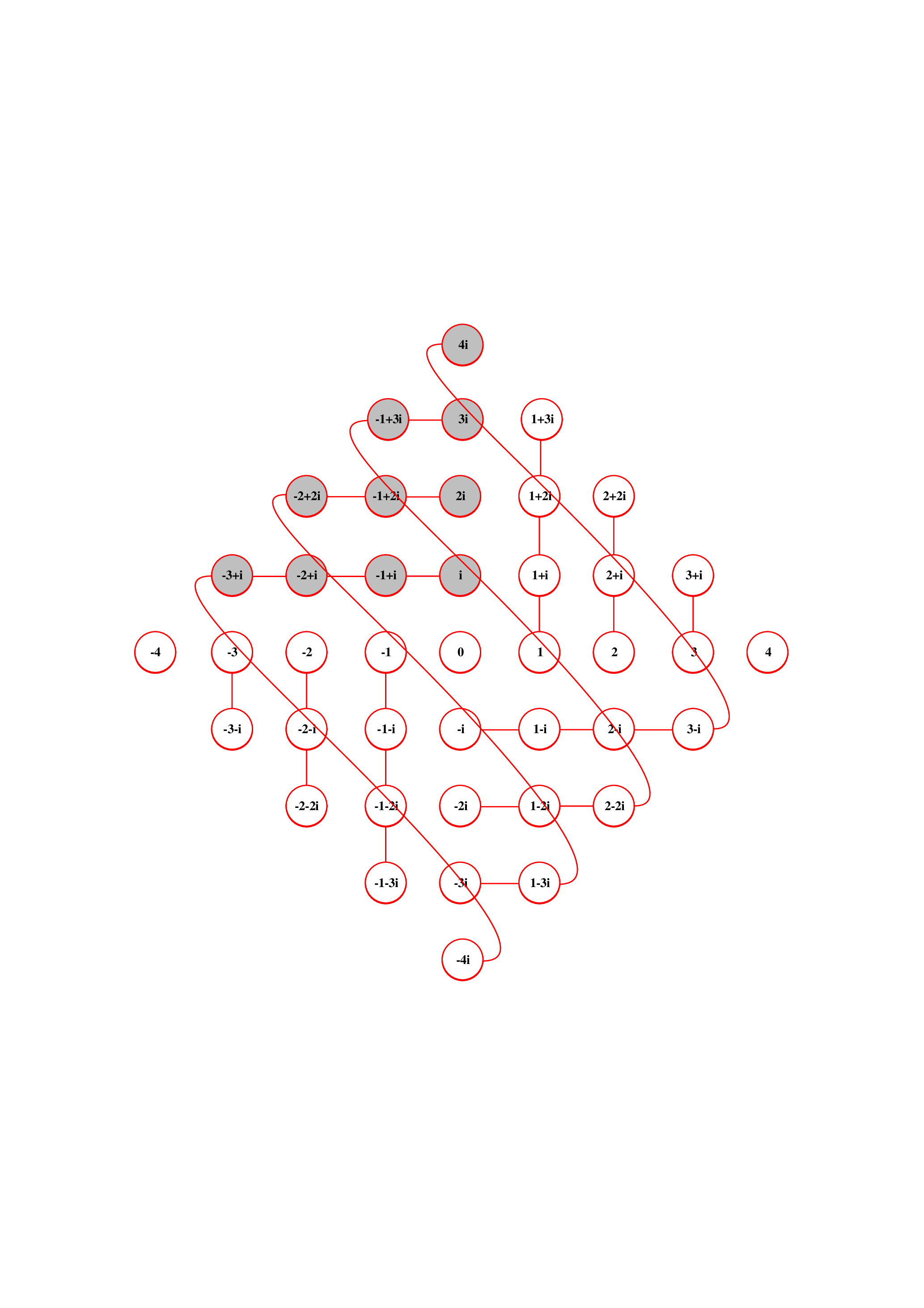}
        \caption{$W^R$ wrapped-araound edges}
    \end{subfigure}%
    ~
    \begin{subfigure}[b]{0.5\textwidth}
        \centering
        \includegraphics[scale=0.45]{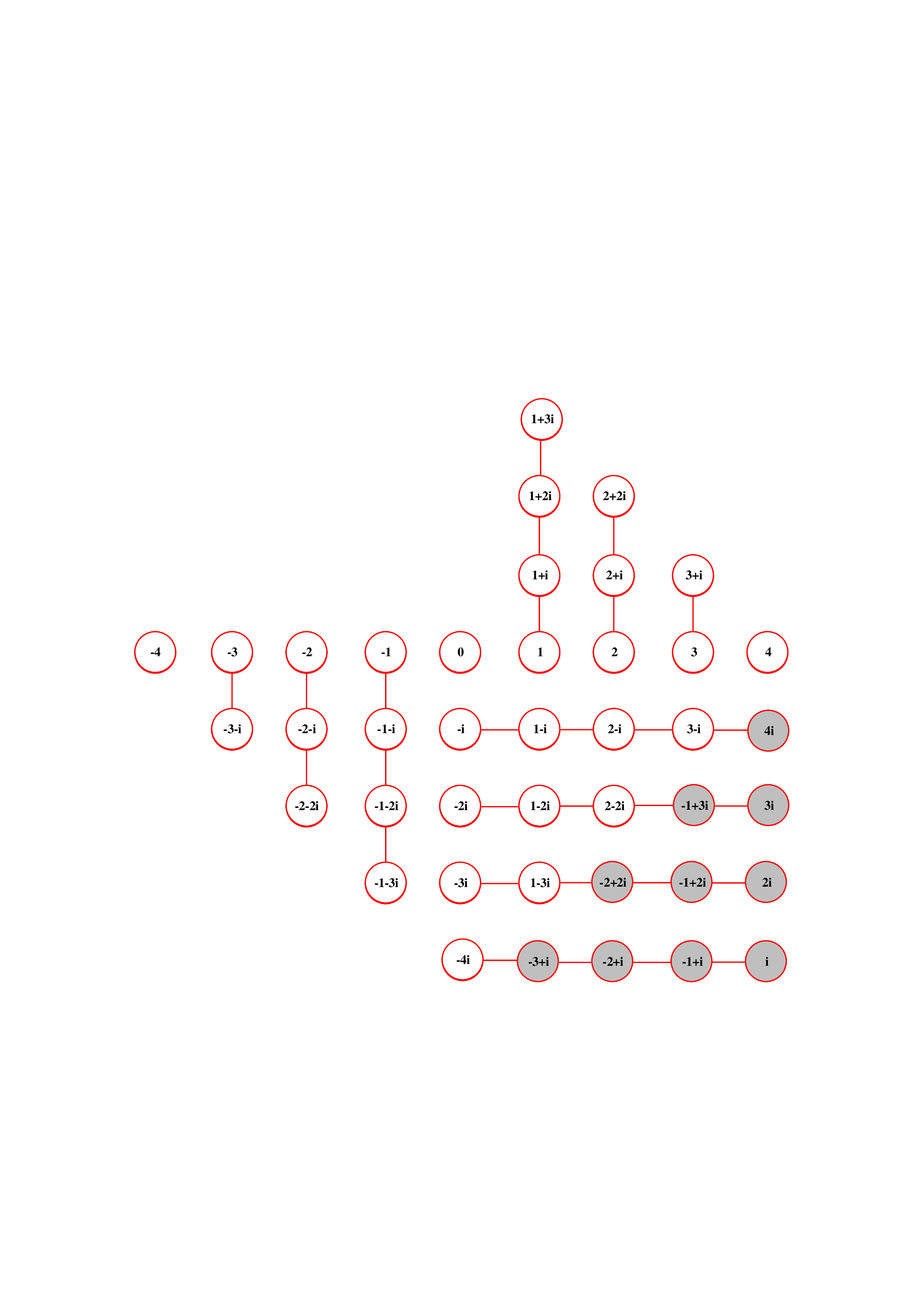}
        \caption{$W^R$ straightened edges}
    \end{subfigure}
    \caption{$H^R$ paths in $R_4$}
    \label{paths2}
\end{figure*}
\end{proof}

\bigskip
\begin{lemma}
\label{Depth}
The longest path in $B_{k}$ or in $R_k$
starting at node $0$ is of length $2k$, $k \ge 2$.
\end{lemma}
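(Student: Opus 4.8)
The plan is to determine the depth of each rooted tree branch by branch. Since $B_k$ and $R_k$ are spanning trees (Corollary~\ref{EDST}) rooted at $0$, the longest path from $0$ is exactly the depth. I would start by noting that $0$ has degree $2$ in $B_k$ — its incident edges are $(0,1)$, from $B$, and $(0,\mathbf{i})$, from $\rho(B)$ — and likewise degree $2$ in $R_k$, with incident edges $(0,-\mathbf{i})$, from $\sigma(B)$, and $(0,-1)$, from $\sigma\rho(B)$. Thus each of $B_k\setminus\{0\}$ and $R_k\setminus\{0\}$ falls into exactly two components, and any path from $0$ lives in one of them; it therefore suffices to bound the longest path from $0$ in each branch and to exhibit one of length $2k$.

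For $B_k$ I would use the decomposition $B_k = T_1\cup T_2$ with $T_1=\rho(B)\cup H^B$ and $T_2=B\cup\rho(H^B)$, where $\rho(H^B)=\rho(A)\cup\rho(W^B)\cup\rho^3(A)$. The proof of Lemma~\ref{Connected} already shows $T_1$ and $T_2$ are connected, and Proposition~\ref{BasicRB}(2) makes them edge-disjoint; since $B_k$ is a tree and $0$ lies in both, a count of edges and vertices forces $V(T_1)\cap V(T_2)=\{0\}$, so the two branches of $B_k\setminus\{0\}$ are $T_1\setminus\{0\}$ and $T_2\setminus\{0\}$. For a path $P$ from $0$ whose second vertex is $\mathbf{i}$ (so $P$ lies in $T_1$): in a tree the intersection of two paths is a path, so $P\cap\rho(B)$ is a sub-path of $\rho(B)$ containing $0$, which means $P$ first runs up $\rho(B)$ from $0$ to some vertex $b\mathbf{i}$ ($1\le b\le k$) and then never meets $\rho(B)$ again; hence the part of $P$ from $b\mathbf{i}$ onward uses no edge of $\rho(B)$ and is a path in $H^B$, of length $\le k$ by Lemma~\ref{QsPathSize}. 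So $P$ has length $\le b+k\le 2k$. The branch $T_2$ is handled identically after applying the automorphism $\rho$, which preserves lengths, exchanges horizontal and vertical, and carries $H^B$ to $\rho(H^B)$. Equality is witnessed by the explicit path $0,\mathbf{i},2\mathbf{i},\dots,k\mathbf{i}$, then the $W^B$-edge leaving $k\mathbf{i}$, then the row of $\rho^2(A)$ it enters, traversed to that row's far end — length $k+1+(k-1)=2k$.

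For $R_k$ I would run the same argument with the spines $\sigma(B)$ and $\sigma\rho(B)$, writing a path from $0$ as a climb along one spine (length $\le k$) followed by a path in $H^R$ (resp. $\rho(H^R)$). The one new wrinkle is that Lemma~\ref{QsPathSize} only bounds $H^R$-paths by $k+1$, not $k$, which would give a crude estimate of $2k+1$. To fix this I would return to the proof of Lemma~\ref{QsPathSize}: a path of length $k+1$ in $H^R$ is a full row of $\rho\sigma(A)$ of some length $i$, then one $W^R$-edge, then a full row of $\rho^3\sigma(A)$ of length $k-i$; such a path must have $i\ge 1$ (the length-$0$ row of $\rho\sigma(A)$ is the single vertex at the very top of the spine $\sigma(B)$, and it is incident to no $W^R$-edge), and its $\rho\sigma(A)$-endpoint is the spine vertex at distance $k-i$ from $0$. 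So whenever the $H^R$-portion of $P$ reaches length $k+1$, the spine portion prepended to it has length $\le k-1$, keeping the total $\le(k-1)+(k+1)=2k$; otherwise the $H^R$-portion has length $\le k$ and the total is $\le b+k\le 2k$. A length-$2k$ path is given, for instance, by $0,-\mathbf{i},\dots,-(k-1)\mathbf{i}$ along $\sigma(B)$, then the row of $\rho\sigma(A)$ at $-(k-1)\mathbf{i}$ to its far end, then a $W^R$-edge, then the row of $\rho^3\sigma(A)$ reached, traversed to its far end — of length $(k-1)+1+1+(k-1)=2k$.

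The step I expect to be the real obstacle is this last one: for $R_k$ the bound $k+1$ from Lemma~\ref{QsPathSize}, added to a maximal spine of length $k$, only gives $2k+1$, so one genuinely has to argue that these two extremes are never realized together — the topmost spine vertex cannot sit on a length-$(k+1)$ path of $H^R$. Everything else is routine bookkeeping with the component descriptions of $B_k$ and $R_k$ and the actions of $\rho$ and $\sigma$.
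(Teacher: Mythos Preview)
Your overall strategy matches the paper's: decompose each path from $0$ as a segment along a spine ($B$, $\rho(B)$, $\sigma(B)$, or $\sigma\rho(B)$) followed by a segment in $H^B$, $\rho(H^B)$, $H^R$, or $\rho(H^R)$, and invoke Lemma~\ref{QsPathSize}. Your treatment of $B_k$ and of the $\sigma(B)$-branch of $R_k$ is essentially the paper's argument, including the key observation that the far end $-k\mathbf{i}$ of $\sigma(B)$ has no outgoing $W^R$-edge.

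There is, however, a genuine gap in the $\sigma\rho(B)$-branch of $R_k$. You assert that the two $R_k$-branches are handled ``by the same argument,'' but they are \emph{not} symmetric (no automorphism of $R_k$ exchanges them; note $\rho\sigma\neq\sigma\rho$, and likewise $\rho$ does not swap the two branches of $B_k$ either --- that step should be done by direct inspection, not by citing $\rho$). Concretely, your subgraph $T_1'=\sigma(B)\cup H^R$ is \emph{not connected}: the pair $\{k,\,k\mathbf{i}\}$ forms an isolated component of $H^R$ (the $W^R$-edge with $a=k$ joins $k$ to $k\mathbf{i}$, and neither vertex lies on the spine $\sigma(B)$ nor meets any other $H^R$-edge). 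Hence your vertex--edge count does not force $V(T_1')\cap V(T_2')=\{0\}$; in fact $k$ and $k\mathbf{i}$ lie in both. In the actual tree $R_k$, the vertex $k$ is reached via $0\to -1\to\cdots\to -k\to k\mathbf{i}\to k$, where the last edge $(k\mathbf{i},k)$ lies in $W^R\subset H^R$, \emph{not} in $\rho(H^R)$. So this path is not of the form ``spine in $\sigma\rho(B)$ followed by a path in $\rho(H^R)$,'' and your bound does not cover it. The paper isolates exactly this case: the length-$k$ spine to $-k$ can be continued by at most two further edges, giving total length $k+2\le 2k$ for $k\ge 2$. This is also precisely where the hypothesis $k\ge 2$ is used (indeed $R_1$ has depth $3>2$), which your write-up never invokes.
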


\begin{proof}
It is useful to observe that the properties of the mapping $\rho$ and
Lemma~\ref{QsPathSize} imply that the length of a path is at most~$k$ in
$\rho\left(H^{B}\right) $ and at most $(k+1)$ in $\rho\left(  H^{R}\right)$.
\newline
The longest path in $\rho\left(B\right)$ is of length $k$. Any
path in $B_{k}$ starting at node $0$ having an initial segment in
$\rho\left(B\right)$ may continue through $H^{B}$ only.
Following Lemma~\ref{QsPathSize}, the length of such path is at most $k+k=2k$.
This may happen only if the initial segment is of length $k$ and,
consequently, the path leads to node $-\mathbf{i}$.
\newline
Similar reasoning applies to $B$ and $\rho\left(H^{B}\right)$.
Any path in $B_{k}$ starting at node $0$ having an initial segment in $B$
may continue through $\rho\left(H^{B}\right)$ only.
The longest path of this kind is, therefore, of length $2k$ and leads to node $-1$.
\newline
The longest path in $\sigma\left(B\right)$ starting at node $0$ is of
length $k$ and cannot be extended. Any other path in $\sigma\left(B\right)$
is of length $k-1$ or less. Any path in $R_{k}$ starting at node $0$
having an initial segment in $\sigma\left(B\right)$ may continue through
$H^{R}$ only. Following Lemma~\ref{QsPathSize}, the length of such path is at most
$\left(  k-1\right)  +\left(  k+1\right)  =2k$. This may happen only if the
initial segment is of length $(k-1)$, and thus, the path leads to
node $\mathbf{i}$.
\newline
The longest path in $\sigma\rho\left(B\right)$
starting at node $0$ is of length $k$ and may continue in $R_{k}$ by a path
of length two or less. The total length is then not larger than $k+2 \leq 2k$,
since $k \geq 2$; the equality takes place for a path leading to node $k$ only if
$k=2$. Any path in $R_{k}$ starting at node $0$ having an initial
segment shorter than $k$ in $\sigma\rho\left(B\right)$ may continue
through $\rho\left(H^{R}\right)$ only. Accordingly, The longest path of this kind is
of length $\left(k-1\right)  +\left(  k+1\right)  =2k$. This may
happen only if the initial segment is of length $k-1$, and hence, the
path leads to node $1$.
\end{proof}

\bigskip
\begin{lemma}
\label{HorVert}
A horizontal path and a vertical path in $G_k$, each being of length
$k+1$ or less, can have at most two common nodes. This may happen only if one of
the paths is of length $k+1$ and the two common nodes are its starting and ending nodes.
\end{lemma}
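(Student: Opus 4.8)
The plan is to argue directly in the residue-class model $G_k=G(\alpha_k)$ with $\alpha_k=k+(k+1)\mathbf{i}$, in which a horizontal edge adds $\pm 1$ and a vertical edge adds $\pm\mathbf{i}$ modulo $\alpha_k$. The first step is to determine the shape of a horizontal path $P$ of length $\ell_P\le k+1$. Since the vertices of a path are distinct, two consecutive horizontal steps cannot be $+1$ followed by $-1$ (or vice versa) — that would give $v_{j+1}=v_{j-1}$, and the only escape, a closed path of length $2$, is excluded because $\alpha_k\mid n$ with $1\le n\le k+1$ would force $\|\alpha_k\|=2k^2+2k+1\mid n^2$, impossible as $0<n^2\le(k+1)^2<2k^2+2k+1$. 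Hence all steps have the same sign and, up to orientation, $P$ is the list of residues $w,w+1,\dots,w+\ell_P$ for some node $w$ (and by the same norm estimate these $\ell_P+1$ residues are genuinely distinct). Likewise a vertical path $Q$ of length $\ell_Q\le k+1$ is $z,z+\mathbf{i},\dots,z+\ell_Q\mathbf{i}$ for some node $z$.

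Second, I would convert ``common node'' into a number-theoretic condition. A common node of $P$ and $Q$ is an equality $w+j\equiv z+m\mathbf{i}\pmod{\alpha_k}$ with $0\le j\le\ell_P$, $0\le m\le\ell_Q$. Suppose there are two distinct common nodes, indexed by $(j_1,m_1)\neq(j_2,m_2)$. Subtracting the two congruences gives $a-b\mathbf{i}\equiv 0\pmod{\alpha_k}$ with $a=j_1-j_2$ and $b=m_1-m_2$; moreover $a\neq 0$ (otherwise the two nodes share a horizontal index and hence coincide) and similarly $b\neq 0$. Taking norms, $2k^2+2k+1=\|\alpha_k\|$ divides $a^2+b^2$. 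Since $0<a^2+b^2\le 2(k+1)^2=2k^2+4k+2<2(2k^2+2k+1)$, we must have $a^2+b^2=2k^2+2k+1$. Writing this as $p^2+q^2$ with $0\le p\le q\le k+1$ forces $q=k+1$ (from $2q^2\ge 2k^2+2k+1$ we get $q^2>k^2$, so $q$ is the unique integer in $(k,k+1]$) and then $p^2=2k^2+2k+1-(k+1)^2=k^2$; thus $\{|a|,|b|\}=\{k,k+1\}$.

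Third, I would read off the conclusion. Exactly one of $|a|,|b|$ equals $k+1$. If $|a|=k+1$, then $\ell_P\ge|a|=k+1$, so $\ell_P=k+1$, and $j_1,j_2\in\{0,\dots,k+1\}$ with $|j_1-j_2|=k+1$ force $\{j_1,j_2\}=\{0,\ell_P\}$, so the two common nodes are precisely the two endpoints of $P$; the case $|b|=k+1$ is symmetric with $Q$ in the r\^ole of $P$. To rule out three common nodes $(j_i,m_i)$, $i=1,2,3$, apply the previous paragraph to each pair: every pair satisfies $|j_a-j_b|\ge k$. Ordering the three $j$-values, the largest minus the smallest is at least $2k$, which contradicts the bound $\le k+1$ when $k\ge 2$ (and when $k=1$ the forced equalities in the $j$-coordinates are incompatible with the parallel constraint $|m_a-m_b|=k+1$ on the $m$-coordinates). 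Hence there are at most two common nodes, and when there are two the stated description holds.

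The only genuinely delicate point is the divisibility step in the second paragraph: one cannot appeal to unique factorization in $\mathbb{Z}[\mathbf{i}]$, since $2k^2+2k+1$ need not be prime (for instance $k=3$ gives $25=5^2$), so the argument must instead exploit the size bounds $|a|,|b|\le k+1$ to trap $a^2+b^2$ at exactly $\|\alpha_k\|$ and then to pin down $\{|a|,|b|\}=\{k,k+1\}$. The remainder of the proof is routine bookkeeping about the linear structure of horizontal and vertical paths.
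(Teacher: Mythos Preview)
Your proof is correct, but it takes a genuinely different route from the paper. The paper argues geometrically: by vertex-transitivity one common node is placed at $0$, and then both paths restricted to length~$k$ from $0$ lie entirely inside the diamond $\{|a|+|b|\le k\}$ as straight line segments, so they meet only at $0$; a second intersection can arise only via the single wrap-around edge at the far end of a path of length exactly $k+1$. Your argument instead works arithmetically in $\mathbb{Z}[\mathbf{i}]/(\alpha_k)$: you parametrize the paths, subtract two putative intersections to get $\alpha_k\mid a-b\mathbf{i}$, and then the norm bound $0<a^2+b^2\le 2(k+1)^2<2\|\alpha_k\|$ pins down $a^2+b^2=\|\alpha_k\|$ and hence $\{|a|,|b|\}=\{k,k+1\}$. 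This is more self-contained---it does not rely on the pictorial fact that short paths avoid wrap-around---and it also yields the ``at most two'' conclusion cleanly via the pigeonhole on the $j$-coordinates. The paper's version is shorter but leans on the visualization; yours would survive in settings where no convenient fundamental domain is at hand.

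One small wrinkle in your write-up: the ``closed path of length~$2$'' case $v_0,v_0{+}1,v_0$ is \emph{not} excluded by the divisibility argument $\alpha_k\mid n$ you give (that argument rules out a \emph{monotone} closed walk, not a back-and-forth one). This does not damage the proof, since that degenerate path has the same vertex set as the length-$1$ monotone path $v_0,v_0{+}1$; but you should phrase the reduction that way rather than claim the case is impossible.
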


\begin{proof}
Assume a horizontal and a vertical paths, each being of length $k+1$ or less,
having at least one node in common. Since $G_{k}$ is vertex-transitive, we
can assume, without loss of generality, that the common node is $0$.
The paths initial segments of size $k$ starting from $0$ are line segments and cannot intersect
in more than one node. Therefore, another
common node may exist only if one of the paths is of length $k+1$
and $0$ is its starting node, the last edge of such path is a wrap-around edge,
and the other common node is the ending node of the path.
\end{proof}

\bigskip
\begin{theorem}
\label{BasicProp} $B_{k}$ and $R_{k}$ are edge-disjoint node-independent
spanning trees in $G_{k}$ each of depth $2k$, $k \geq 2$.
\end{theorem}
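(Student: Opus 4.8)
The plan is to establish the three requirements separately: (i) $B_k$ and $R_k$ are edge-disjoint spanning trees, (ii) each has depth exactly $2k$, and (iii) they are node-independent. Parts (i) and (ii) are already in hand: (i) is precisely Corollary~\ref{EDST}, and (ii) follows immediately from Lemma~\ref{Depth}, since that lemma exhibits paths of length $2k$ from the root and shows no path from $0$ is longer. So the real content of the theorem is node-independence, and that is where essentially all the work goes.

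For node-independence I would fix an arbitrary vertex $v \in V_k$ and compare the unique path $P_B$ from $0$ to $v$ in $B_k$ with the unique path $P_R$ from $0$ to $v$ in $R_k$; the goal is to show $P_B \cap P_R = \{0, v\}$. The structural idea is that in $B_k$ every root-to-vertex path decomposes (as analyzed in Lemma~\ref{Depth}) into an initial horizontal segment in $\rho(B)$ or $B$ followed by a horizontal piece inside $H^B$ or $\rho(H^B)$ — so $P_B$ is, up to the single turn at the root, built from horizontal edges, whereas by the $\sigma$-symmetry $P_R$ is built analogously from vertical edges (recall $\sigma$ and $\rho$ each swap horizontal and vertical edges). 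More carefully, I would split into cases according to which of the four "arms" ($\rho(B)\cup H^B$ versus $B \cup \rho(H^B)$ for black; $\sigma(B)\cup H^R$ versus $\sigma\rho(B) \cup \rho(H^R)$ for red) contains $v$, reducing to comparing a horizontal path and a vertical path — or, in the boundary cases where $v$ lies on $B$, $\rho(B)$, $\sigma(B)$, or $\sigma\rho(B)$ itself, comparing a short straight segment against a longer path. By Lemma~\ref{QsPathSize} and the observation in the proof of Lemma~\ref{Depth}, each of the relevant monochromatic sub-paths has length at most $k+1$; then Lemma~\ref{HorVert} says a horizontal and a vertical path of length $\le k+1$ share at most two nodes, and two only when one has length exactly $k+1$ with the shared nodes being its endpoints. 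Since $P_B$ and $P_R$ both start at $0$ and end at $v$, the shared endpoints are exactly $0$ and $v$, giving independence — provided the "horizontal part" of $P_B$ and the "vertical part" of $P_R$ don't themselves overlap with the opposite-colored initial turn segment, which is a short finite check using that the turn happens only at the root.

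The one subtlety that needs care, and which I expect to be the main obstacle, is that a path $P_B$ is not purely horizontal: it may begin with one edge of $\rho(B)$ (a horizontal edge of $B$ rotated, i.e.\ a \emph{vertical} edge) before entering $H^B$, and symmetrically $P_R$ may begin with a horizontal edge before entering $H^R$. So the comparison is really between a path that is "one vertical edge then horizontal" against one that is "one horizontal edge then vertical," and I must verify that the extra turn-edges do not create a third common node. This is handled by noting the turn edge of $P_B$ is incident to $0$, the turn edge of $P_R$ is incident to $0$, these two edges are distinct (one horizontal, one vertical, hence edge-disjointness is automatic), and any further intersection would force a common node at distance $\ge 1$ from $0$ lying on both a near-horizontal and a near-vertical path — which, after absorbing the single turn edge into the length bound (still $\le k+1$ since the arms of $H^B, H^R$ have length $\le k$), is excluded by Lemma~\ref{HorVert} exactly as above. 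I would organize this as a short case analysis on the arm containing $v$, invoking vertex-transitivity only implicitly through the already-proven lemmata, and conclude that for every $v$ the paths meet only at $\{0,v\}$, which together with Corollary~\ref{EDST} and Lemma~\ref{Depth} proves the theorem.
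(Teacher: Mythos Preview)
Your overall plan---quote Corollary~\ref{EDST} for edge-disjoint spanning trees, Lemma~\ref{Depth} for depth $2k$, and then argue node-independence via Lemma~\ref{HorVert}---matches the paper's. The problem is in your structural picture of the paths, and it breaks the reduction to Lemma~\ref{HorVert}.

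You describe $P_B$ as ``one vertical edge (the turn edge, incident to $0$) followed by a horizontal segment in $H^B$ of length $\le k$'', and then absorb the one extra edge into a length bound of $k+1$ so that Lemma~\ref{HorVert} applies. But that is not the shape of $P_B$. A path in $B_k$ from $0$ consists of an initial segment of length up to $k$ lying entirely in $B$ (horizontal) or in $\rho(B)$ (vertical), followed by a second segment of length up to $k$ in $\rho(H^B)$ (vertical) or $H^B$ (horizontal), respectively. The turn is \emph{not} at the root, and the initial leg is not a single edge; it can be as long as $k$. (This is exactly what the proof of Lemma~\ref{Depth} uses to get paths of length $2k$.) The same holds for $P_R$, with $\sigma(B),\sigma\rho(B)$ playing the role of the initial leg. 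So each of $P_B,P_R$ is an L-shaped path of total length up to $2k$, and you cannot feed the whole path into Lemma~\ref{HorVert}, which only controls a horizontal-versus-vertical comparison of length $\le k+1$.

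The paper closes this gap with an argument you are missing: set $X=B\cup\rho(B)\cup\sigma(B)\cup\sigma\rho(B)$ and first rule out the initial legs. If $v\in V(X)$ then in one of the trees the path to $v$ runs only through degree-three nodes, which are leaves in the other tree, so no intermediate collision is possible. If the putative common intermediate node $u$ lies in $V(X)$, a direct check shows the only candidate with degree two in both trees is $k\mathbf{i}$, and that case is dispatched by hand. Once $u,v\notin V(X)$, the collision must occur inside $H^B\cap\rho(H^R)$ or $\rho(H^B)\cap H^R$, and \emph{those} pieces really are straight horizontal or vertical segments of length $\le k+1$; only then does Lemma~\ref{HorVert} apply, and it forces $u,v$ to be the endpoints of a length-$(k+1)$ segment, hence in $V(X)$, a contradiction. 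Your proposal needs this degree-based elimination of the $X$ part before Lemma~\ref{HorVert} can do its job.
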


\begin{proof}
Let $k\geq2$. Then $B_{k}$ and $R_{k}$ are edge-disjoint spanning trees in
$G_{k}$ by Corollary~\ref{EDST}.
Since both trees are rooted in $0$ and following Lemma~\ref{Depth},
the depth of each tree is $2k$.
To complete the proof of this theorem, we
need to prove that $B_{k}$ and $R_{k}$ are node independent. Let $v\in V_{k}$.
Assume, in contrary, a node $u$ being an intermediate node of both: the path from
$0$ to $v$ in $B_{k}$, and the path from $0$ to $v$ in $R_{k}$. Then $u$ is of
degree two in both trees. The node $v$ cannot belong to $V\left(X\right)$,
where $X=B\cup\rho(B)\cup\sigma(B)\cup\sigma\rho(B)$, since otherwise in one
of the trees the path to $v$ leads exclusively through nodes of degree three.
Let  $v\in V\left(  G_{k}\right)  -V(X)$. The path from $0$ to $v$ starts in
both trees by a segment in $X$ and then continues by a segment in one of
$H^{B},H^{R},\rho\left(  H^{B}\right)  ,\rho\left(  H^{R}\right)  $. The node
$u$ cannot belong to $V\left(  X\right)  $, since the only such $u$ being of
degree two in both trees is $u=k\mathbf{i}$; the only path in $R_{k}$ with
intermediate node $k\mathbf{i}$ leads to node $k$, while the path to $k$ in
$B_{k}$ does not contain $k\mathbf{i}$. Hence, $u$ belongs to one of
$H^{B}\cap\rho\left(  H^{R}\right)  $ or $\rho\left(  H^{B}\right)  \cap H^{R}$.
It follows by Lemma~\ref{QsPathSize} and the~$\rho$
mapping that $u$ belongs to an intersection of a
horizontal and a vertical paths, each being of length $k+1$ or less.
Lemma~\ref{HorVert} implies that one of these paths is of length~$(k+1)$ and
its starting and ending nodes are~$u$~and~$v$.
This leads to a contradiction since such a path may exist only
if~$u,v \in V(X)$.
%
\end{proof}

\bigskip
\begin{remark}
The only two edges in $E_{k}$ not belonging to the graphs $B_{k}$ or $R_{k}$ are
the edges $\left(  -k,-k\mathbf{i}\right)  $ and $\left(  k,-k\mathbf{i}\right) $.
\end{remark}

\bigskip
$B_1$ and $R_1$ are depicted in Fig.~\ref{b1r1}. As it can be seen, $B_1$ is of depth two;
however, $R_1$ is of depth three.
We can define $R'_k = (V(R_k), E(R_k) - \{(k, i)\} + \{(k, -i)\})$.
$R'_1$ is illustrated in Fig.~\ref{r1prime}, and its depth is clearly two.
%
%
\begin{figure*}[t!]
    \centering
    \begin{subfigure}[b]{0.5\textwidth}
        \centering
        \includegraphics[scale=0.45]{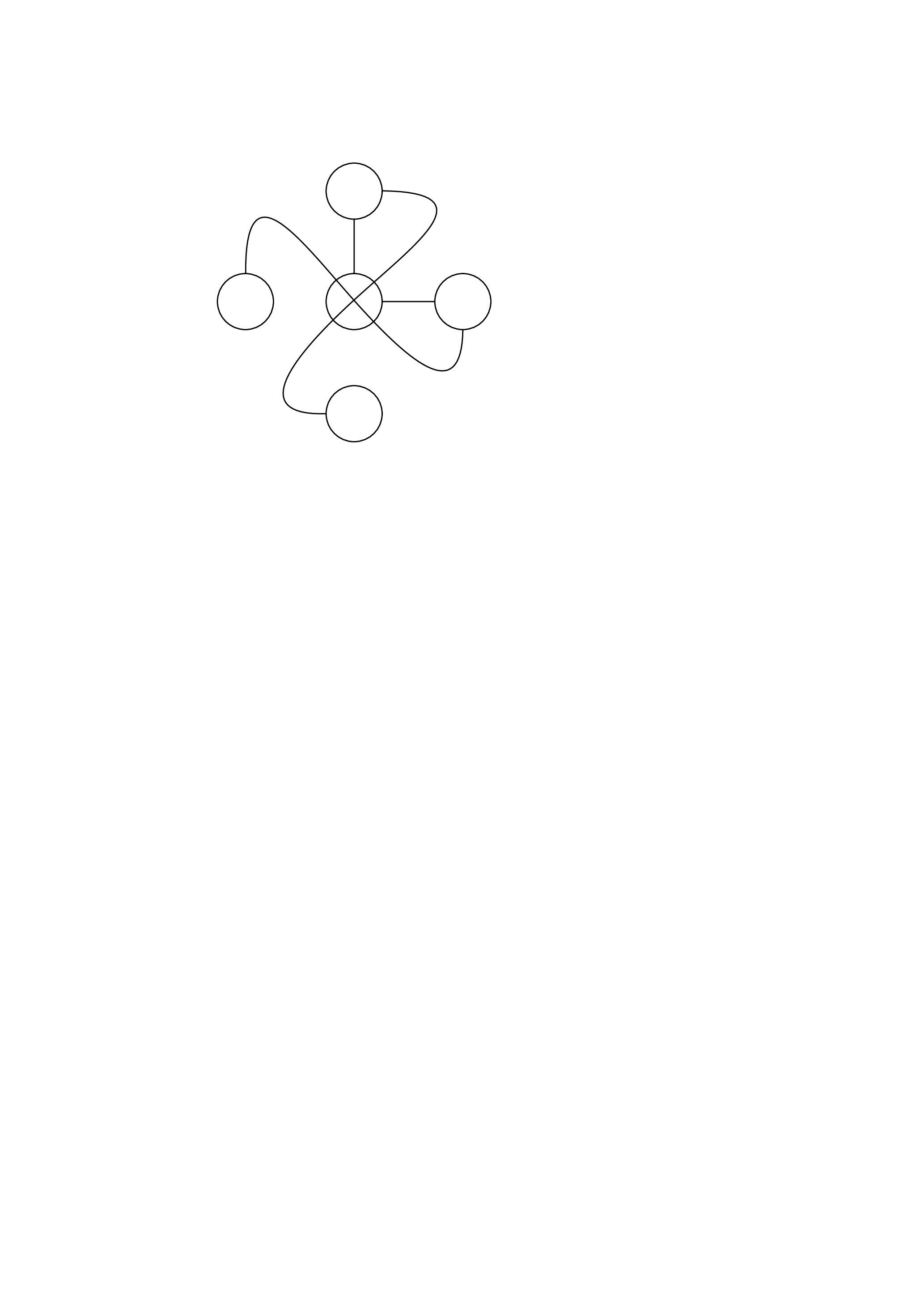}
        \caption{$B_1$}
        \label{b1}
    \end{subfigure}%
    ~
    \begin{subfigure}[b]{0.5\textwidth}
        \centering
        \includegraphics[scale=0.45]{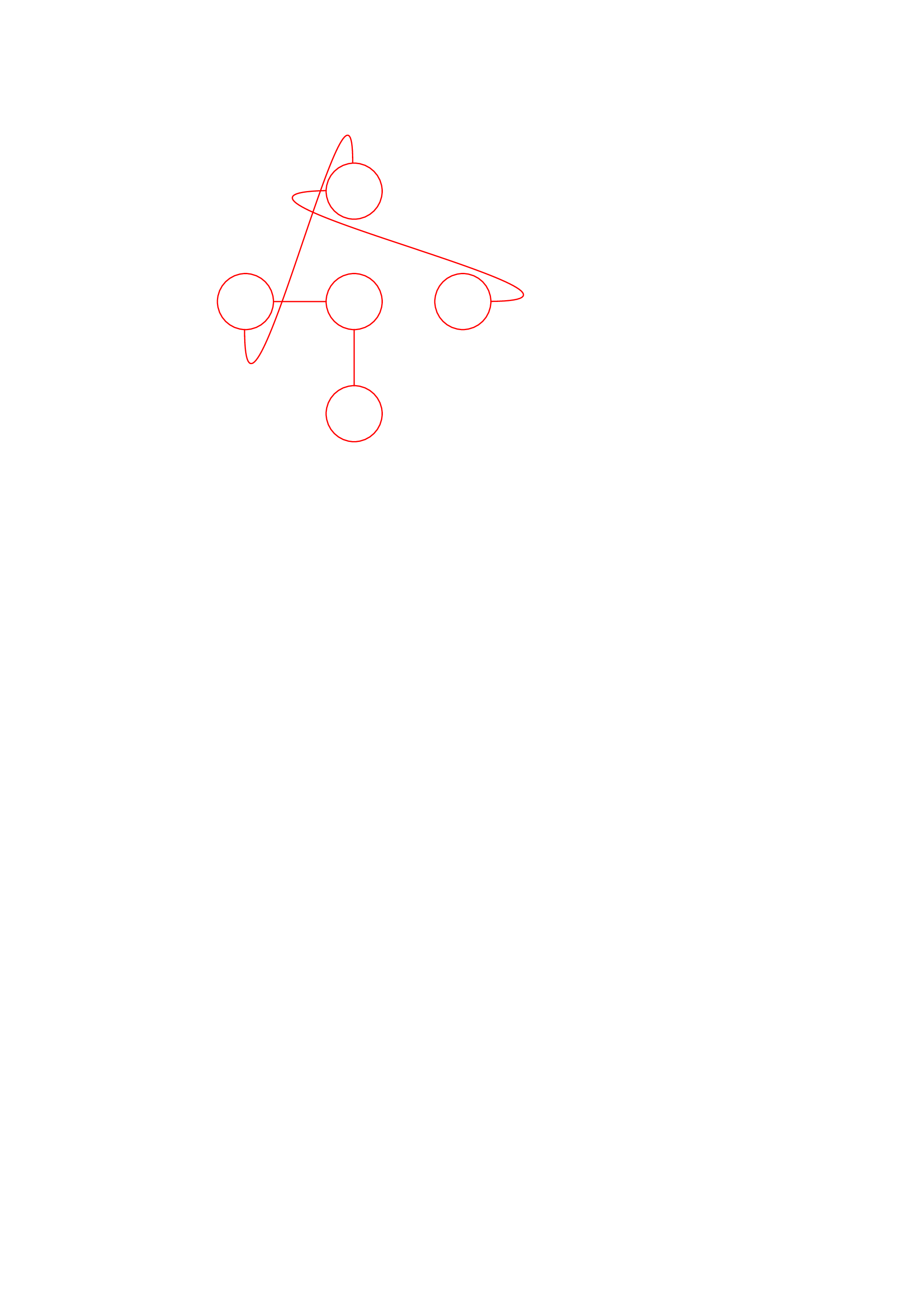}
        \caption{$R_1$}
        \label{r1}
    \end{subfigure}
    ~
    \begin{subfigure}[b]{0.5\textwidth}
        \centering
        \includegraphics[scale=0.45]{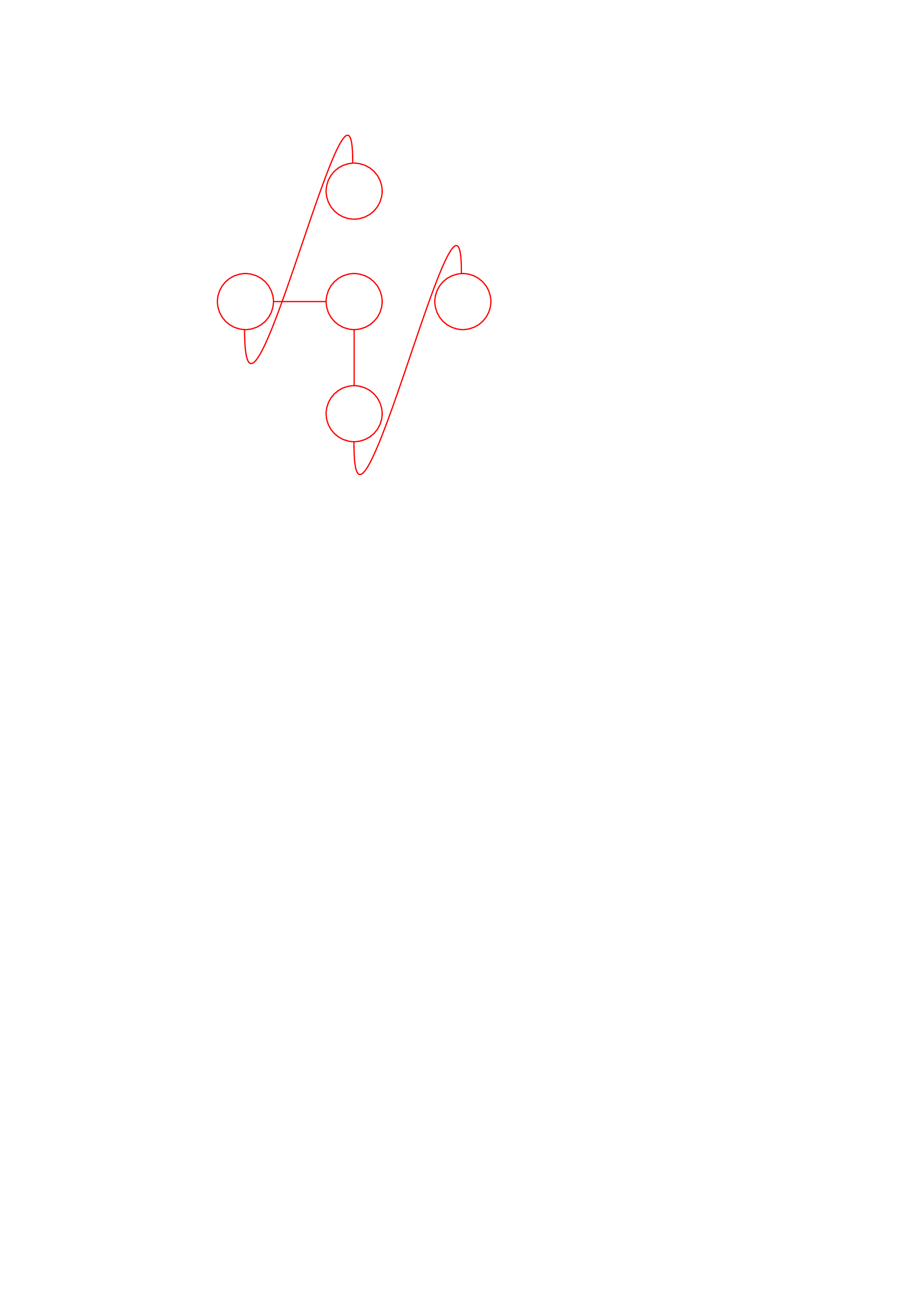}
        \caption{$R'_1$}
        \label{r1prime}
    \end{subfigure}
    \caption{$B_1$, $R_1$, and $R'_1$}
    \label{b1r1}
\end{figure*}

\bigskip
\begin{remark}
\label{BkR'k}
The path $P_{R'_k}(0, k)$ is shorter than $P_{R_k}(0, k)$ by one, and
the only two edges in $E_{k}$ not belonging to the graphs $B_{k}$ or $R'_{k}$ are
$\left(-k,-k\mathbf{i}\right)  $ and $\left(  k,k\mathbf{i}\right)$.
\end{remark}

\bigskip
Theorem~\ref{EXTEND} below extends Theorem~\ref{BasicProp} to prove that $B_k$ and $R'_k$ are edge-disjoint
node-independent spanning trees in $G_k$, $k \ge 1$.

\bigskip
\begin{theorem}
\label{EXTEND}
$B_k$ and $R'_k$ are edge-disjoint
node-independent spanning trees in $G_k$ each of depth $2k$, $k \ge 1$.
\end{theorem}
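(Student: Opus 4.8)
The plan is to split the claim into the range $k\ge 2$, handled by reusing Theorem~\ref{BasicProp}, and the base case $k=1$, which Theorem~\ref{BasicProp} does not cover (the connectivity argument behind Lemma~\ref{Connected} already breaks for $k=1$) and which I will settle by inspecting $B_1$, $R_1$ and $R'_1$ directly; see Fig.~\ref{b1r1}. For $k\ge 2$, the point is that $R'_k$ is obtained from the spanning tree $R_k$ (Corollary~\ref{EDST}) by one local modification, so all four assertions can be transferred from Theorem~\ref{BasicProp} once that modification is understood.

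The structural fact to pin down first is that in $R_k$ the vertex $k$ is a leaf whose unique incident edge is the wrap-around edge $(k,k\mathbf{i})\in W^R$. This I would check by going through the component subgraphs of~(\ref{RU}): among $\sigma(A),\rho\sigma(A),\rho^2\sigma(A),\rho^3\sigma(A),\sigma(B),\sigma\rho(B)$ and $\rho(W^R)$, none contributes an edge at $k$ -- most do not contain $k$ at all, and $\rho^2\sigma(A)$ contains it only as an isolated vertex, since the relevant index range stops one short -- while $W^R$ contributes exactly the edge $(k,k\mathbf{i})$. Hence deleting $(k,k\mathbf{i})$ splits $R_k$ into the isolated vertex $k$ and a tree on $V_k\setminus\{k\}$; since $-k\mathbf{i}\in V(\sigma(B))$ belongs to that tree and $(k,-k\mathbf{i})\notin E(R_k)$ -- the two edges of $E_k$ lying in neither $B_k$ nor $R_k$ being $(-k,-k\mathbf{i})$ and $(k,-k\mathbf{i})$, by the remark following Theorem~\ref{BasicProp} -- adding $(k,-k\mathbf{i})$ yields a connected graph $R'_k$ with $|V_k|-1$ edges, that is, a spanning tree, differing from $R_k$ only in that the leaf $k$ now hangs from $-k\mathbf{i}$ instead of $k\mathbf{i}$. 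Consequently $P_{R'_k}(0,v)=P_{R_k}(0,v)$ for every $v\ne k$, while $P_{R'_k}(0,k)=(0,-\mathbf{i},-2\mathbf{i},\dots,-k\mathbf{i},k)$ -- the path along $\sigma(B)$ followed by the new edge -- of length $k+1$.

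Granting this, the remaining properties for $k\ge 2$ follow quickly. Edge-disjointness: $E(R'_k)\subseteq(E(R_k)\setminus\{(k,k\mathbf{i})\})\cup\{(k,-k\mathbf{i})\}$, whose first part is disjoint from $E(B_k)$ by Theorem~\ref{BasicProp} and whose second part avoids $E(B_k)$ by the same remark. Node-independence: for $v\ne k$ the two root-to-$v$ paths coincide with those of Theorem~\ref{BasicProp} and so meet only in $0$ and $v$; for $v=k$ one compares $P_{B_k}(0,k)=(0,1,2,\dots,k)$, the baseline path $B\subseteq B_k$ on the positive real axis, with $P_{R'_k}(0,k)$ above, whose interior lies on the negative imaginary axis, so the two paths share only $0$ and $k$. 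Depth: $B_k$ is unchanged and has depth $2k$ by Lemma~\ref{Depth}; in $R'_k$ every root path other than the one to $k$ is a root path of $R_k$, hence of length at most $2k$ with equality attained (for instance by the unchanged path to $1$), and the root path to $k$ has length $k+1\le 2k$, so $R'_k$ has depth exactly $2k$.

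For $k=1$, $G_1$ has five vertices, and the definitions give $E(B_1)=\{(0,1),(0,\mathbf{i}),(\mathbf{i},-\mathbf{i}),(-1,1)\}$ and $E(R'_1)=\{(0,-1),(0,-\mathbf{i}),(-1,\mathbf{i}),(-\mathbf{i},1)\}$; these are plainly edge-disjoint trees rooted at $0$, each of depth $2$, and running through the five choices of $v$ shows that the root-to-$v$ paths in $B_1$ and in $R'_1$ always meet only in $0$ and $v$. The step I expect to be the main obstacle is the structural claim for $k\ge 2$ that $k$ is a leaf of $R_k$ attached via $(k,k\mathbf{i})$: it is a small but somewhat fiddly case analysis over the eight component subgraphs in~(\ref{RU}), and it is precisely what allows every other assertion to be inherited from Theorem~\ref{BasicProp} rather than reproved from scratch.
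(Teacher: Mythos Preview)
Your proof is correct and follows essentially the same approach as the paper: reduce $k\ge 2$ to Theorem~\ref{BasicProp} via the observation that $k$ is a leaf of $R_k$ reattached by the unused edge $(k,-k\mathbf{i})$, and handle $k=1$ by direct inspection. The only minor variation is that the paper verifies node-independence at $v=k$ by noting that the intermediate nodes $1,\dots,k-1$ of $P_{B_k}(0,k)$ have degree~3 in $B_k$ and hence are leaves in $R'_k$, whereas you simply write out both paths and observe they lie on different axes; both arguments are equally valid.
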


\begin{proof}
By Theorem~\ref{BasicProp}, $B_k$ and $R_k$ are edge-disjoint node-independent spanning trees in $G_k$, $k \ge 2$.
Disconnecting a leaf node from $R_k$ and reconnecting it to another node results again in a connected graph.
This implies that $R'_k$ is a spanning tree in $G_k$.
\newline
Only one edge in $R'_k$ does not belong to $R_k$, and this edge is unused in $B_k$.
Thus, $B_k$ and $R'_k$ are edge-disjoint spanning trees in $G_k$.
\newline
All paths in $R'_k$ are identical to those in $R_k$ except for the path that leads to $k$.
To prove that $B_k$ and $R'_k$ are node-independent spanning trees, we only need
to verify that the paths lead to $k$ in $B_k$ and in $R'_k$ are node-disjoint.
In $B_k$, the path form $0$ to $k$ leads exclusively through nodes of degree three. These nodes must be
leaves in $R'_k$ and cannot be intermediate nodes in any path.
Hence, $B_k$ and $R'_k$ are edge-disjoint node-independent spanning trees in $G_k$, $k \ge 2$.
\newline
The distance from $0$ to $k$ in $R'_k$ equals the distance from $0$ to $-k \mathbf{i}$ plus one = $k+1 < 2k$, $k \ge 2$.
Therefore, $B_k$ and $R'_k$ are of depth $2k$, $k \ge 2$.
\newline
The theorem can be easily verified for $k=1$, and this concludes the theorem proof.
\end{proof}

\bigskip
$B_4$ and $R'_4$ are depicted in Fig.~\ref{BT4}.

\begin{figure}[H]
\centering
\includegraphics[scale=0.95]{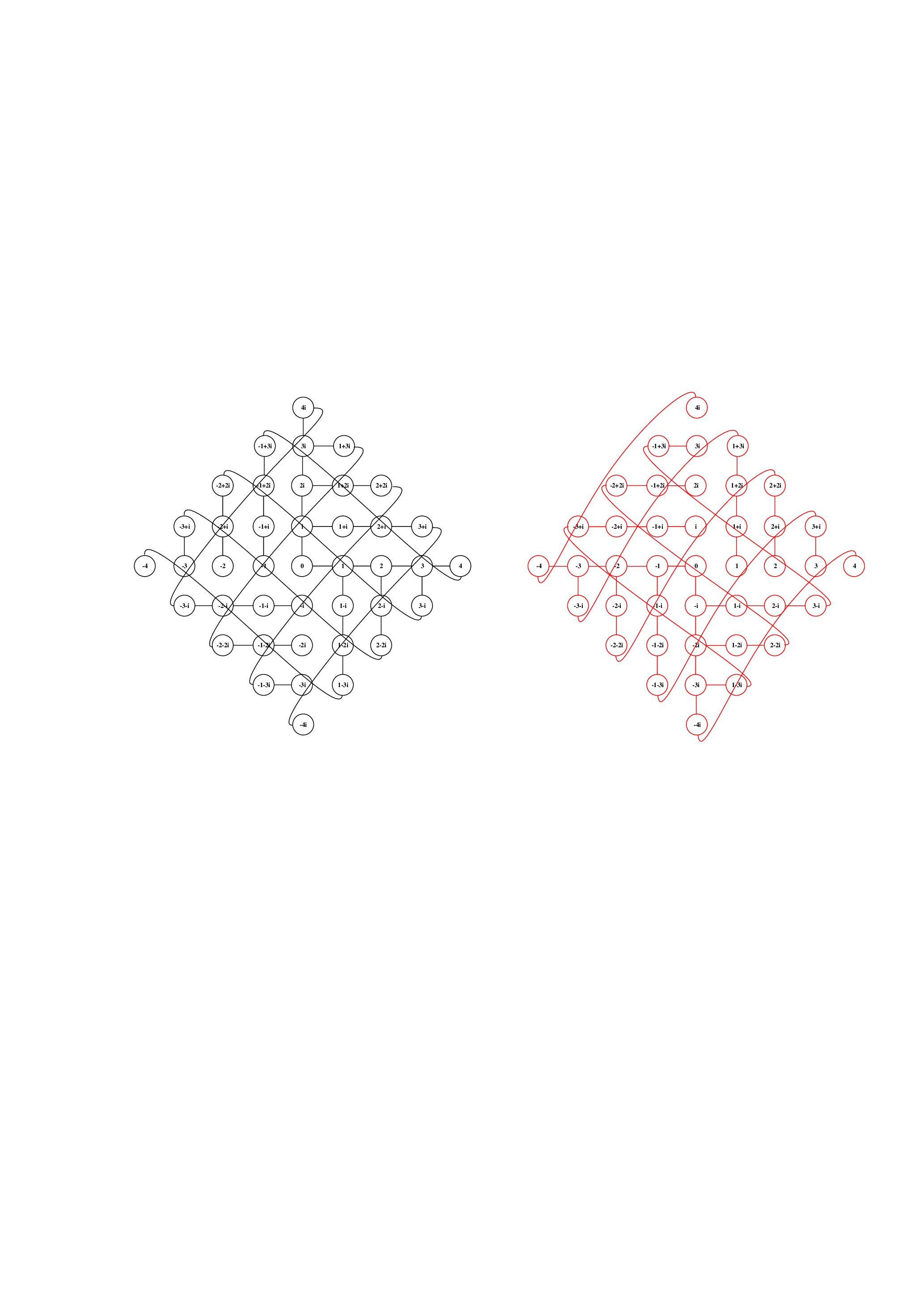} \caption{The trees $B_{4}$ and $R'_{4}$}%
\label{BT4}%
\end{figure}

\section{Routing in $G_{k}$ using edge-disjoint node-independent
trees\label{SectionRouting}}

The existence of two edge-disjoint node-independent spanning trees in $G_{k}$
makes it possible to tolerate either one node or one edge failure. Sending a
message along both trees guarantees the message delivery in case of a single
failure of either a node or an edge. The trees also can be utilized to
securely distribute a message by splitting it into two packages each of which
is sent to the destination along a different tree. Since the trees are node
independents, the paths to the destinations in those trees are disjoint. This
guarantees that only the destination will receive all the packages. Note that
in both of the above applications the communications must be initiated from
the root. However, it is possible for each node~$v$ to implicitly construct
two edge-disjoint node-independent spanning trees rooted in $v$ itself.

We present routing algorithms for such usage when the source node
$s_{1}+s_{2}\mathbf{i}$ sends a message to a destination node $d_{1}+d_{2}\mathbf{i}$
along $B_{k}$ or $R'_{k}$, $k \ge 1$. The algorithms are based on the vertex-transitive property of $G_k$.
The source node is mapped to node~$0$. The destination node and every transient
node are mapped accordingly.
Algorithm~1 outlines the routing in the source node.
The tree type and the destination mapped location determine the message direction
from the source node.
Table~\ref{SourceRoutBlackRed} summarizes all possible directions form the source node and the exclusive condition
associated with each direction.

Algorithm~2 describes the routing in a non-source node. If a receiving node is the destination,
then no further routing is required.
Otherwise the receiving node is a transient node and the message is rerouted after deciding its new direction.
By default, a message keeps flowing in the same direction unless it needs to make a turn.
A message makes a turn only if the transient node lays on the horizontal axis or the vertical axis,
and the destination lays on a continuing path in~$H^B$ or~$H^R$.
Table~\ref{TransRoutBlackRed} shows the two exclusive possible conditions in which a message needs to make a turn,
and it associates each condition with the message new direction. If non of these conditions holds, then the default case
applies.

\begin{algorithm}[H]
\caption{Routing algorithm for a source node $s_{1}+s_{2}\mathbf{i}$ to send a message
to a destination node $d_{1}+d_{2}\mathbf{i}$ through $B_k$ or $R'_k$}
\begin{algorithmic}[1]
\STATE Compute the mapped location of the destination:\\
\hskip\algorithmicindent \hskip\algorithmicindent
$d_{1}^{m}+d_{2}^{m}\mathbf{i=}\left(  \left(  d_{1}+d_{2}\mathbf{i} \right)
-\left(  s_{1}+s_{2}\mathbf{i}\right)  \right)  \operatorname{mod} \alpha_{k}$
\STATE Use Table \ref{SourceRoutBlackRed} to determine the message direction  \textbf{Dir}
\STATE Create the message header: $\left(  \left\langle s_{1},s_{2}\right\rangle ,\left\langle d_{1}^{m},d_{2}^{m}\right\rangle, \textbf{Dir} \right)$
and send the message
\end{algorithmic}
\end{algorithm}

\begin{algorithm}[H]
\caption{Routing algorithm for a non-source node $t_{1}+t_{2}\mathbf{i}$}
\begin{algorithmic}[1]
\STATE Read the message header $\left(  \left\langle s_{1},s_{2}\right\rangle ,\left\langle d_{1}^{m},d_{2}^{m}\right\rangle, \textbf{Dir} \right)$
\STATE Compute the current node mapped location:\\
\hskip\algorithmicindent \hskip\algorithmicindent $t_{1}^{m}+t_{2}^{m}\mathbf{i=}\left(  \left(  t_{1}+t_{2}\mathbf{i}\right)  -\left(  s_{1}+s_{2}\mathbf{i}\right)  \right)\operatorname{mod}\alpha_{k}$
\IF{$\left\langle d_{1}^{m},d_{2}^{m}\right\rangle = \left\langle t_{1}^{m},t_{2}^{m}\right\rangle $}
\STATE Accept the message and \textbf{exit}
\ELSE
\STATE Use Table \ref{TransRoutBlackRed} to decide the message direction \textbf{NewDir}
\STATE Create the new message header: $\left(  \left\langle s_{1},s_{2}\right\rangle ,\left\langle d_{1}^{m},d_{2}^{m}\right\rangle, \textbf{NewDir} \right)$ and send the message
\ENDIF
\end{algorithmic}
\end{algorithm}

\begin{table}
\centering
{\footnotesize
\begin{tabular}
[c]{|l|l|l|l|}\hline
Condition & $B_k$ Direction & $R'_k$ Direction & $d_{1}^{m}+d_{2}^{m}\mathbf{i}$ Location\\\hline
$d_{1}^{m}\times d_{2}^{m}>0$ & \multicolumn{1}{|c|}{$+\mathbf{i}$} &
\multicolumn{1}{|c|}{$-1$} & Quadrant 1 or 3\\\hline
$d_{1}^{m}\times d_{2}^{m}<0$ & \multicolumn{1}{|c|}{$+1$} &
\multicolumn{1}{|c|}{$-\mathbf{i}$} & Quadrant 2 or 4\\\hline
$d_{2}^{m}=0$ & \multicolumn{1}{|c|}{$+1$} & \multicolumn{1}{|c|}{$-1$} &
Horizontal Axis\\\hline
$d_{1}^{m}=0$ & \multicolumn{1}{|c|}{$+\mathbf{i}$} &
\multicolumn{1}{|c|}{$-\mathbf{i}$} & Vertical Axis\\\hline
\end{tabular}
}%
\caption{The source node routing table}
\label{SourceRoutBlackRed}
\end{table}

\begin{table}
\centering
{\footnotesize%
\begin{tabular}
[c]{|l|l|c|cc|}\hline
Condition & \textbf{NewDir} & \multicolumn{3}{|c|}{Location}\\
\cline{3-5}
&  & $t_{1}^{m}+t_{2}^{m}\mathbf{i}$ & \multicolumn{2}{|c|}{$d_{1}^{m}+d_{2}^{m}\mathbf{i}$}\\
\cline{4-5}
&  &  & $B_k$ & \multicolumn{1}{|c|}{$R'_k$} \\
\hline
$(  t_{2}^{m}=0)$  AND &  & On Horizontal  Axis & Quadrant 2 or
4 & \multicolumn{1}{|c|}{Quadrant 1 or 3}\\
$(  t_{1}^{m}=d_{1}^{m}$ OR $t_{1}^{m}=d_{1}^{m}%
+k+1$ OR $t_{1}^{m}=d_{1}^{m}-k)  $ &
\multicolumn{1}{|c|}{$-\mathbf{i}$} &  &  & \multicolumn{1}{|c|}{}\\\hline
$\left(  t_{1}^{m}=0\right)$ AND & \multicolumn{1}{|c|}{} & On Vertical
Axis & Quadrant 1 or 3 & \multicolumn{1}{|c|}{Quadrant 2 or 4}\\
$(  t_{2}^{m}=d_{2}^{m}$ OR $t_{2}^{m}=d_{2}^{m}+k+1$ OR
$t_{2}^{m}=d_{2}^{m}-k)  $ & \multicolumn{1}{|c|}{$+1$} &  &  &
\multicolumn{1}{|c|}{}\\\hline
Default & \multicolumn{1}{|c|}{\textbf{Dir}} & \multicolumn{3}{|c|}{
Keep the message flowing in the same direction.}\\\hline
\end{tabular}
}%
\caption{A transient node routing table}
\label{TransRoutBlackRed}
\end{table}

\newpage
\section{Conclusions\label{SectionConcl}\bigskip}

We introduced two constructions of edge-disjoint node-independent spanning trees in dense Gaussian networks.
By taking advantage of the node-transitivity in dense Gaussian networks,
we defined a limited number of subgraphs and deployed a rotation technique to construct the first pair of trees.
The depth of each tree in the first construction is $2k$, $k \ge 2$, where $k$ is the network diameter.
We extended the first construction to construct the second pair of trees.
The depth of each tree in the second construction is $2k$, $k \ge 1$.
Based on the second construction, we designed algorithms that can be used in fault-tolerant routing or
secure message distribution.
The source node in these algorithms is not restricted to a specific node; it could be any node in~$G_k$.

In our future work we intend to investigate constructing independent spanning trees and completely independent
spanning trees in dense Gaussian networks. Our initial investigations indicate that applying similar techniques
to those deployed in this paper could lead to fruitful outcomes.
\bibliographystyle{IEEEtranS}
\bibliography{IEEEabrv,EDNIST}





\end{document}